\DeclareFixedFont{\MyTitleFont}{OT1}{ptm}{m}{n}{20pt}
\DeclareFixedFont{\MyAuthorFont}{OT1}{ptm}{m}{n}{12pt}
\DeclareFixedFont{\MyAbstractTitleFont}{OT1}{ptm}{m}{n}{11pt}
\DeclareFixedFont{\MyAbstractFont}{OT1}{ptm}{m}{n}{10pt}
\DeclareFixedFont{\MySubtitleFont}{OT1}{ptm}{m}{n}{14pt}
\DeclareFixedFont{\MySubSubtitleFont}{OT1}{ptm}{m}{n}{12pt}
\DeclareFixedFont{\MySubSubSubtitleFont}{OT1}{ptm}{m}{n}{10pt}
\DeclareFixedFont{\MyTextFont}{OT1}{ptm}{m}{n}{10pt}
\title{\MyTitleFont Partially Independent Control Scheme for Spacecraft Rendezvous in Near-Circular Orbits \vspace{0em}}
\author{\MyAuthorFont Neng Wan and Weiran Yao}
\date{}
\theoremstyle{remark}
\newtheorem*{notation}{Notation}
\newtheorem{remark}{Remark}
\theoremstyle{definition}
\newtheorem{lemma}{Lemma}
\newtheorem{assumption}{Assumption}
\newtheorem{theorem}{Theorem}
\begin{document}

\maketitle

\footnotetext[0]{Neng Wan is with Department of Mathematics and Statistics, University of Minnesota Duluth, Duluth 55811, USA (corresponding author). Email: \href{wanxx179@d.umn.edu}{wanxx179@d.umn.edu}.}
\footnotetext[0]{Weiran Yao is with School of Astronautics, Harbin Institute of Technology, Harbin 150001, China. Email: \href{yaowr1990@163.com}
{yaowr1990@163.com}.}

\vspace{-4em}

\begin{abstract}
Due to the complexity and inconstancy of the space environment, accurate mathematical models for spacecraft rendezvous are difficult to obtain, which consequently complicates the control tasks. In this paper, a linearized time-variant plant model with external perturbations is adopted to approximate the real circumstance. To realize the robust stability with optimal performance cost, a partially independent control scheme is proposed, which consists of a robust anti-windup controller for the in-plane motion and a ${{H}_{\infty}}$ controller for the out-of-plane motion. Finally, a rendezvous simulation is given to corroborate the practicality and advantages of the partially independent control scheme over a coupled control scheme.

\vspace{0.5em}

\hspace{-1.6em}{\it Keywords}: Spacecraft rendezvous; Near-circular orbits; Partially independent control; Robust control; Anti-windup.
\end{abstract}

\titleformat*{\section}{\centering\MySubtitleFont}
\titlespacing*{\section}{0em}{1.25em}{1.25em}[0em]
\section{Introduction}
Widely applied to crew exchange, large-scale assembly, spacecraft maintenance, docking, interception, formation flying and other astronautic missions involving more than one spacecraft, autonomous spacecraft rendezvous has been regarded as a crucial operational technology in aerospace engineering. As the autonomous control scheme is a cardinal and decisive issue that determines the success of the rendezvous, it has been and continues to be an engaging area of study.

Most of the mathematical models employed in investigating spacecraft rendezvous are derived from the two-body problem. Because of their concise and linearized form, Clohessy-Wiltshire equations \citetext{\citealp{W.H.Clohessy_JAS_1960}} were favored by many researchers, though this model was initially developed to describe the rendezvous in circular orbits. The models put forward by \cite{J.P.DeVries_AIAAJ_1963} and \cite{J.Tschauner_AIAAJ_1967} extended our knowledge to the rendezvous in elliptical orbits; however, nonlinear terms were involved, which circumscribed their broader implementations in control engineering. Considering the fact that most of the rendezvous missions were conducted in near-circular orbits with small eccentricities, researchers began to search for some eclectic models that are linearized and sufficiently precise. A comprehensive survey on these efforts was given by \cite{T.E.Carter_JGCD_1998}; nevertheless, all the linearization results introduced in this literature are in terms of either the true or eccentric anomaly of one spacecraft and require the solution of the Kepler problem, which is time and computational consuming. A time-explicit dynamical model overcoming this defect was first introduced by \cite{Anthony_AIAAJ_1965}, and a more recent development on time-explicit models was contributed by \cite{R.G.Melton_JGCD_2000}.

Robust guaranteed cost control was first raised by \cite{S.S.L.Chang_TAC_1972} to optimize preassigned cost function, and many of the following literatures were carried out based on their works. \cite{I.R.Petersen_TAC_1994} synthesized a state feedback guaranteed cost controller via a Riccati equation approach. \cite{L.Yu_Auto_1999} designed a guaranteed cost controller for linear uncertain time-delay systems via a linear matrix inequality (LMI) method. \cite{S.H.Esfahani_IJRNC_2000} solved the guaranteed cost output feedback control problem in a matrix substitution manner. More recently, \cite{X.P.Guan_TFS_2004} and \cite{L.Wu_OCAM_2011} investigated the guaranteed cost control methods for time-delay systems. \cite{H.Zhang_TSMC_2008} studied a guaranteed cost control scheme for a class of uncertain stochastic nonlinear systems with multiple time delays. \cite{K.Tanaka_TSMC_2009} presented a guaranteed cost control for polynomial fuzzy systems via a sum of squares approach.

Robust ${{H}_{\infty}}$ control technique is frequently used in synthesizing guaranteed cost controllers for systems with external disturbances. This technique was first proposed by \cite{G.Zames_TAC_1981}. Nonetheless, the focus of the ${{H}_{\infty}}$ control problem quickly shifted from its applications to formulating the solvable control problems due to the lack of an efficient tool to solve the ${{H}_{\infty}}$ problem. Time domain approach \citetext{\citealp{B.R.Barmish_TAC_1983}}, frequency domain approach \citetext{\citealp{B.A.Francis_Springer_1987}} and Riccati equation approach \citetext{\citealp{P.P.Khargonekar_TAC_1990}} were the three main methods in solving the $H_\infty$ control problems before the LMI approach \citetext{\citealp{S.P.Boyd_SIAM_1994, M.Chilali_TAC_1996}} became widely used. For more recent papers on robust ${{H}_{\infty}}$ control, refer to \cite{M.Liu_SP_2011}, \cite{L.Wu_JFI_2011} and references therein.

Optimal spacecraft rendezvous problem has attracted numerous researchers. Some previous works on this topic have been introduced in \cite{N.Wan_MPE_2013}. Based on the sliding mode control theory, \cite{B.Ebrahimi_AA_2008} and \cite{L.Zhao_CDC_2013} developed the optimal guidance laws for spacecraft rendezvous. \cite{H.Gao_TCST_2009} investigated a multi-object robust ${{H}_{\infty}}$ control scheme for rendezvous in circular orbits. \cite{Z.Li_MPE_2013a} proposed a sample-data control technique for rendezvous via a discontinuous Lyapunov approach. \cite{X.Yang_AST_2013} synthesized a robust reliable controller for thrust-limited rendezvous in circular orbits. \cite{X.Gao_JFI_2012} studied a robust ${{H}_{\infty}}$ control approach for rendezvous in elliptical orbits. \cite{X.Yang_TIE_2012} considered the spacecraft rendezvous with thrust nonlinearity and sampled-data control. \cite{N.Wan_MPE_2013} put forward a robust tracking control method for relative position holding and rendezvous with actuator saturation in near-circular orbits; and in another paper of \cite{N.Wan_AAA_2014}, they provided an observer-based control scheme for spacecraft rendezvous. More recent works on optimal spacecraft rendezvous can be found in \cite{Z.Li_MPE_2013b}, \cite{Z.Li_JFI_2013c}, \cite{D.Sheng_MPE_2014} and \cite{Zhou_TCST_2014}. Nevertheless, to the best of the authors' knowledge, most of the existing literatures either synthesized a coupled rendezvous controller that regulated the in-plane and out-of-plane motions jointly or neglected the control task of out-of-plane motion. Although the in-plane and out-of-plane motions were treated separately by \cite{X.Gao_IMAJMCI_2011}, an identical control method was applied to two motions. Therefore, up to now, an efficient control scheme which accommodates the different dynamical and engineering features of the in-plane and the out-of-plane motions has not been proposed yet.

In this paper, a time-explicit linearized model for rendezvous in near-circular orbits is established in a concise form that facilitates the controller synthesis; non-circularity of the reference orbits and external perturbations are considered to ensure the accuracy of the plant model. In-plane and out-of-plane motion controllers are synthesized respectively in order to meet the dynamical properties and requirements of each motion. For the in-plane motion usually driven by high-thrust propellers with high fuel consumption, a robust anti-windup guaranteed cost controller is synthesized to realize optimal rendezvous under the constraints of orbital non-circularity and actuator saturation. Moreover, it is well known that the out-of-plane maneuver or maneuver that changes the orbital inclination consumes much more energy compared with other kinds of orbital maneuvers \citetext{\citealp{H.Curtis_BH_2005}}; therefore a robust ${{H}_{\infty}}$ controller is synthesized to guarantee the robust stability of the out-of-plane motion, which is usually driven by low-thrust propellers thus very sensitive to the external disturbances. Then the partially independent controller is obtained by solving two convex optimization problems subject to LMI constraints. At the end of this paper, a numerical rendezvous simulation is presented to verify the advantages of the partially independent control scheme over a coupled robust controller.

The remainder of this paper is organized as follows. \hyperref[sec2]{Section 2} establishes the dynamical models and formulates the control problems; \hyperref[sec3]{Section 3} shows the main result of the partially independent control scheme; \hyperref[sec4]{Section 4} presents a numerical simulation; and \hyperref[sec5]{Section 5} draws the conclusion.

\begin{notation}
The notations used throughout this paper are defined in this paragraph. $\|\cdot\|_2$ refers to the Euclidean vector norm. diag($\cdots$) stands for a block-diagonal matrix. In symmetric block matrices or complex matrix expressions, an asterisk ($*$) is used to represent a term that is induced by symmetry. For a matrix $\boldsymbol{A}$, $\boldsymbol{A}^T$ stands for the transpose of $\boldsymbol{A}$; and sym$(\boldsymbol{A})$ stands for $\boldsymbol{A} + \boldsymbol{A}^T$ when $\boldsymbol{A}$ is a square matrix. For a real symmetric matrix $\boldsymbol{B}$, the notation $\boldsymbol{B} > \mathbf{0}$ ($\boldsymbol{B} < \mathbf{0}$) is used to denote its positive- (negative-) definiteness. $\boldsymbol{I}$ and $\mathbf{0}$ respectively denote the identity matrix and zero matrix with compatible dimension. If the dimensions of matrices are not explicitly stated, they are assumed to be compatible for algebraic operation.
\end{notation}

\section{Dynamical Model and Problem Formulation}\label{sec2}
In this section, dynamical models for the in-plane and out-of-plane motions are established, and the control problems are formulated with the consideration of the different dynamical features and engineering demands of each motion.

Suppose that a target vehicle is moving on a near-circular orbit with a chase vehicle nearby. Both of the spacecrafts are only influenced by a central gravitational source, and the target vehicle does not maneuver during the rendezvous. A relative Cartesian coordinate system adopted to describe the relative motion between the spacecrafts is defined in \autoref{fig1}. The system's origin is fixed at the centroid of the target vehicle. The $x$-axis is parallel to the vector $\boldsymbol{r}$ from the Earth's centroid to the target's centroid; $\boldsymbol{r}_c$ is the vector from the Earth's centroid to the chaser's centroid. The $z$-axis is aligned with the target orbit's angular momentum vector, and the $y$-axis completes a right-handed coordinate system. 

\begin{figure}[!h]
\centering
\includegraphics[width=0.5\textwidth]{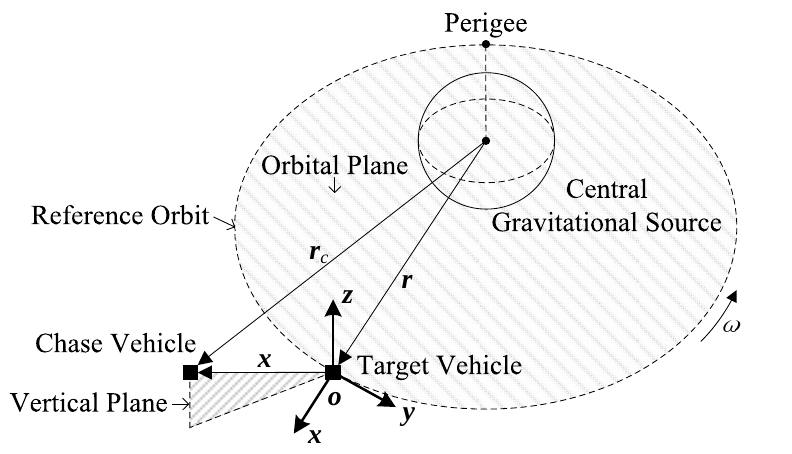}
\caption{Relative Cartesian coordinate system for spacecraft rendezvous.}\label{fig1}
\end{figure}

\noindent Some other important assumptions employed in this paper are also presumed here in case of ambiguity. 
\begin{assumption} \label{asm1}
The propulsions of the chase vehicle are continuous and independent along each axis defined in \hyperref[fig1]{Figure 1}.
\end{assumption}
\begin{assumption} \label{asm2}
The initial out-of-plane distance and velocity between the chase and target spacecrafts are zeros. 
\end{assumption}
\begin{assumption} \label{asm3}
Only the disturbance along the $z$-axis is included in the plant model, {\it i.e.}, external perturbations along the orbital plane are neglected in this paper.
\end{assumption}
\begin{remark}
As redundancy is a fundamental technology for spacecraft system, independent propulsions can be realized with a proper actuator allocation; therefore {\hyperref[asm1]{Assumption~1}} is frequently taken in the existing literatures, such as \cite{B.Ebrahimi_AA_2008}, \cite{H.Gao_TCST_2009} and \cite{Zhou_TCST_2014}. Since out-of-plane maneuver that changes orbital inclination is fuel consuming (for example, when both the initial and terminal orbits are circular, the velocity increase $\upDelta v$ required for an inclination change $\upDelta i$ is $\upDelta v = 2v\sin(\upDelta i/2)$, where $v$ is the orbital velocity in a large magnitude.), the relative distance and velocity along the $z-$axis are often eliminated by launch vehicle before the close-range rendezvous, the phase we mainly investigates in this paper; therefore, {\hyperref[asm2]{Assumption~2}} is reasonable. As it was mentioned above, for the out-of-plane motion, due to its dynamical and engineering properties, the fuel consumption and stability are more sensitive to the external perturbations compared with the other motions along the orbital plane; therefore, it is reasonable for us to conduct a special investigation in the crucial one while omitting the trivial ones, which is the main propose of {\hyperref[asm3]{Assumption~3}}.
\end{remark}

\subsection{Relative Motion Model}\label{sec2_1}
Define the state vector as $\bm{x}(t) = [x,\ y,\ z,\ \dot{x},\ \dot{y},\ \dot{z}]^T$, which contains the relative distances and velocities along each axis; and define the control input vector as $\bm{u}(t) = [f_x,\ f_y,\ f_z]^T$, where $f_i$ for $i = x, y, z$ are the control forces acting on the chase vehicle along each axis. Relative motion models for spacecraft rendezvous in all types of conic orbits can be uniformly expressed in a matrix form as
\begin{equation} \label{sec2_eq1}
\dot{\bm{x}}(t) = \bm{A}_n\bm{x}(t)+\bm{B}\bm{u}(t) \  \textrm{,}
\end{equation}
\noindent where
\begin{flalign*}
\begin{split}
\indent
\bm{A}_n =
\left[\begin{matrix}
0 & 0 & 0 & 1 & 0 & 0\\
0 & 0 & 0 & 0 & 1 & 0\\
0 & 0 & 0 & 0 & 0 & 1\\
2\mu /{{r}^{3}}+{{\omega }^{2}} & {\dot{\omega }} & 0 & 0 & 2\omega  & 0\\
-\dot{\omega } & -\mu /{{r}^{3}}+{{\omega }^{2}} & 0 & -2\omega  & 0 & 0\\
0 & 0 & -\mu /{{r}^{3}} & 0 & 0 & 0\\
\end{matrix}\right]
\textrm{,}
\qquad
\bm{B} = 
\frac{\displaystyle 1}{\displaystyle m}
\left[\begin{matrix}
0 & 0 & 0\\
0 & 0 & 0\\
0 & 0 & 0\\
1 & 0 & 0\\
0 & 1 & 0\\
0 & 0 & 1\\
\end{matrix}\right]
\textrm{.}
\end{split}&
\end{flalign*}

\noindent $\mu$ is the gravitational parameter; $r$ is the radius of the reference orbit; $\omega$ and $\dot{\omega}$ are the angular rate and angular acceleration of the target vehicle; and $m$ is the mass of the chase vehicle. As can be seen in~\eqref{sec2_eq1}, nonlinear terms exist in system matrix $\bm{A}_n$, which makes the controller synthesis difficult. Therefore, a further linearization on ~\eqref{sec2_eq1} is necessary, and a lemma known as generalized Lagrange's expansion theorem is introduced here before the linearization procedures.

\begin{lemma}[\citealp{Battin_1999}]\label{Lemma_Lagrange}
Let $y$ be a function of $x$ in terms of a parameter $\alpha$ by
\begin{equation} \label{sec2_eq2}
y = x + \alpha\phi(y) \ \textrm{.}
\end{equation}
Then for sufficiently small $\alpha$, any function $F(y)$ can be expanded as a power series in $\alpha$,
\begin{equation} \label{sec2_eq3}
F(y) = F(x) + \sum_{n=1}^{\infty}\frac{\alpha^n}{n!}\frac{{\rm d}^{n-1}}{{\rm d}x^{n-1}}\left[\phi(x)^n\frac{{\rm d}F(x)}{{\rm d}x} \right]
\textrm{.}
\end{equation}
\end{lemma}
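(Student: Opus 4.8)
The plan is to treat $y$ as a function of the two independent variables $x$ and $\alpha$ determined implicitly by \eqref{sec2_eq2}, and to recover \eqref{sec2_eq3} as the Taylor expansion of $F(y)$ in powers of $\alpha$ about $\alpha=0$. Since $y=x$ when $\alpha=0$, Taylor's theorem gives $F(y)=F(x)+\sum_{n=1}^{\infty}\frac{\alpha^n}{n!}\,\partial_\alpha^n F(y)\big|_{\alpha=0}$, so everything reduces to identifying the coefficients $\partial_\alpha^n F(y)\big|_{\alpha=0}$ with the $x$-derivative expressions on the right-hand side of \eqref{sec2_eq3}.

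First I would differentiate the defining relation \eqref{sec2_eq2} separately in $\alpha$ and in $x$. This yields $\partial_x y=[1-\alpha\phi'(y)]^{-1}$ and $\partial_\alpha y=\phi(y)[1-\alpha\phi'(y)]^{-1}$, and hence the single fundamental identity $\partial_\alpha y=\phi(y)\,\partial_x y$, valid for all small $\alpha$. The implicit function theorem (applicable where $1-\alpha\phi'(y)\neq 0$, i.e.\ for $\alpha$ sufficiently small) guarantees that $y$ is a smooth---indeed analytic, when $\phi$ is---function of $(x,\alpha)$, which legitimises the repeated differentiations to follow.

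The engine of the proof is a commutation lemma: for every smooth $g$,
\[
\partial_\alpha\big[g(y)\,\partial_x y\big]=\partial_x\big[\phi(y)\,g(y)\,\partial_x y\big].
\]
I would verify this by expanding both sides and reducing the required equality to $\partial_x\partial_\alpha y=\phi'(y)(\partial_x y)^2+\phi(y)\,\partial_x^2 y$, which is exactly what one obtains by differentiating $\partial_\alpha y=\phi(y)\,\partial_x y$ in $x$ and invoking equality of mixed partials. With this lemma in hand, I would prove by induction on $n$ that, for all small $\alpha$,
\[
\partial_\alpha^n F(y)=\partial_x^{\,n-1}\!\big[\phi(y)^n\,\partial_x F(y)\big].
\]
The base case $n=1$ is just $\partial_\alpha F(y)=F'(y)\,\partial_\alpha y=\phi(y)F'(y)\,\partial_x y=\phi(y)\,\partial_x F(y)$. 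For the inductive step I apply $\partial_\alpha$ to the hypothesis, push $\partial_\alpha$ through the $x$-derivatives (mixed partials commute), and apply the commutation lemma with $g(y)=\phi(y)^n F'(y)$; this raises the power of $\phi$ by one and appends an extra $\partial_x$, delivering the $n+1$ case.

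Finally I would set $\alpha=0$. Because $y=x$ and $\partial_x y=1$ there, the bracket collapses to $\phi(x)^n F'(x)$; since differentiation in $x$ commutes with evaluation at fixed $\alpha$, the coefficient becomes $\frac{d^{n-1}}{dx^{n-1}}[\phi(x)^n F'(x)]$, and substituting into the Taylor series yields \eqref{sec2_eq3}. The genuinely delicate point is not the algebra but the analysis behind ``for sufficiently small $\alpha$'': one must secure a neighbourhood on which $y(x,\alpha)$ is analytic and on which the Taylor series actually converges to $F(y)$. The inductive identity is purely formal and routine, so the main obstacle is justifying this analyticity together with the interchange of summation and limit---precisely where a quantitative bound on $\alpha$, via a Cauchy estimate or a majorant argument on $\phi$, would be required.
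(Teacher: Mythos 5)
The paper itself offers no proof of this lemma: it is quoted as a known result from the cited reference (Battin, 1999) and used as a black box to expand $\mu/r^3$, $\omega$, $\omega^2$ and $\dot\omega$ in powers of the eccentricity, so the only meaningful comparison is against the cited source rather than against anything in the paper. Your argument is correct at the formal level, and it is in fact essentially the classical derivation that Battin himself gives (going back to Lagrange, and to Laplace's proof as presented by Goursat): the fundamental identity $\partial_\alpha y=\phi(y)\,\partial_x y$ obtained by implicit differentiation, the commutation lemma $\partial_\alpha\bigl[g(y)\,\partial_x y\bigr]=\partial_x\bigl[\phi(y)g(y)\,\partial_x y\bigr]$, the induction yielding $\partial_\alpha^n F(y)=\partial_x^{\,n-1}\bigl[\phi(y)^n\,\partial_x F(y)\bigr]$, and evaluation at $\alpha=0$ where $y=x$ and $\partial_x y=1$ — every step checks out, and the reduction of the commutation lemma to equality of mixed partials is exactly the right mechanism. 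The one piece you do not supply is the convergence half of the statement: ``can be expanded as a power series in $\alpha$'' asserts that the Taylor series in $\alpha$ actually converges to $F(y)$, whereas your induction only identifies its coefficients. You flag this gap honestly; closing it requires analyticity hypotheses on $\phi$ and $F$ together with a complex-analytic argument (a Cauchy/Rouch\'e representation of the root $y(x,\alpha)$, or a majorant-series bound giving a quantitative radius in $\alpha$). For the purpose the lemma serves in this paper — generating a series in $e$ that is immediately truncated at first order — your formal proof covers everything that is actually used.
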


With equation
\begin{equation} \label{sec2_eq4}
r = a(1-e\cos E) \ \textrm{,}
\end{equation}
where $a$ and $e$ denote the semimajor axis and the eccentricity of the reference orbit; $E$ denote the eccentric anomaly of the target vehicle; nonlinear terms in system matrix $\bm{A}_n$ can be rewritten as the functions in terms of $E$, such as

\nolinenumbers
\begin{subequations}\label{sec2_eq5}
\begin{equation}
\frac{\mu}{r^3} = n^2\left(\frac{a}{r}\right)^3 = n^2\left(\frac{1}{1-e\cos E}\right)^3\textrm{,}
\end{equation}
\begin{equation}
\omega = \frac{h}{r^2} = n\left(\frac{1}{1-e\cos E}\right)^2\textrm{,}
\end{equation}
\begin{equation}
\omega^2 = \frac{h^2}{r^4} = n^2\left(\frac{1}{1-e\cos E}\right)^4\textrm{,}
\end{equation}
\begin{equation}
\dot{\omega} = -\frac{2h}{r^3}\dot{r} = -2n^2\frac{e\sin E}{\left(1-e\cos E\right)^4}\textrm{,}
\end{equation}
\end{subequations}
\noindent where $h$ is the angular momentum of the reference orbit. Moreover, according to \hyperref[Lemma_Lagrange]{Lemma~1} and Kepler's time equation
\begin{equation} \label{sec2_eq6}
E = M + e\sin E \ \textrm{,}
\end{equation}
\noindent where $M = n(t-t_p)$ and $n = \sqrt{\mu / a^3}$ are the mean anomaly and the mean motion of the target vehicle respectively; $t_p$ is the time of periapsis passage; when eccentricity $e$ is sufficiently small, any function $F(E)$ can be expanded as a power series in constant $e$. Therefore, equations (\hyperref[sec2_eq5]{5a-d}) can be expanded as

\nolinenumbers
\begin{subequations}\label{sec2_eq7}
\begin{equation}
\frac{\mu}{r^3}=n^2 \left\{ \frac{1}{\left(1-e\cos M\right)^3} - \frac{3e^2\sin^2 M}{\left(1-e\cos M\right)^4} + \frac{e^2}{2}\left[\frac{12e^2\sin^4M}{\left(1-e\cos M\right)^5} - \frac{9e\cos M\sin^2 M}{\left(1-e\cos M\right)^4} \right] + \cdots \right \}\textrm{,}
\end{equation}
\begin{equation}
\omega = n \left\{\frac{1}{\left(1-e\cos M \right)^2} - \frac{2e^2\sin^2 M}{\left(1-e\cos M \right)^3} + \frac{e^2}{2}\left[\frac{6e^2\sin^4 M}{\left(1-e\cos M\right)^4}  - \frac{6e\cos M\sin^2 M}{\left(1-e\cos M\right)^3} \right] + \cdots \right\}\textrm{,}
\end{equation}
\begin{equation}
\omega^2 = n^2\left\{\frac{1}{\left(1-e\cos M\right)^4} - \frac{4e^2\sin^2 M}{\left(1-e\cos M\right)^5} + \frac{e^2}{2}\left[\frac{20e^2\sin^4 M}{\left(1-e\cos M\right)^6} - \frac{12e\cos M\sin^2 M}{\left(1-e\cos M\right)^5}  \right] + \cdots \right\}\textrm{,}
\end{equation}
\begin{equation}
\dot{\omega} = -2n^2\left\{\frac{e\sin M}{\left(1-e\cos M\right)^4} + e^2 \sin M \left[\frac{\cos M}{\left(1-e\cos M\right)^4} - \frac{4e\sin^2 M}{\left(1-e\cos M\right)^5}\right] + \cdots \right\}\textrm{.}
\end{equation}
\end{subequations}
\noindent Computing the Taylor series expansions of~(\hyperref[sec2_eq7]{7a-d}) around point $e = 0$, we have

\nolinenumbers
\begin{subequations}\label{sec2_eq8}
\begin{equation}
\frac{\mu}{r^3} = n^2\left[1+3e\cos M + \frac{e^2}{2}\left(9\cos 2M + 3\right) + \frac{e^3}{8}\left(53\cos 3M + 27\cos M\right) + O\left(e^4\right)   \right]\textrm{,}
\end{equation}
\begin{equation}
\omega = n\left[1+2e\cos M + \frac{e^2}{2}\left(5\cos2M + 1\right) + \frac{e^3}{4} \left(13\cos3M+3\cos M\right) + O\left(e^4\right)  \right]\textrm{,}
\end{equation}
\begin{equation}
\omega^2 = n^2\left[1+4e\cos M + e^2\left(7\cos2M+3\right) + \frac{e^3}{2}\left(23\cos3M+17\cos M\right) + O\left(e^4\right)  \right]\textrm{,}
\end{equation}
\begin{equation}
\dot{\omega} = -2n^2\left[e\sin M + \frac{5e^2}{2}\sin 2M + e^3 \left(\frac{23}{8}\sin 3M + 4\cos 2M \sin M + \frac{19}{8}\sin M  \right) + O\left(e^4\right) \right]\textrm{.}
\end{equation}
\end{subequations}
\noindent Truncating the expansions~(\hyperref[sec2_eq8]{8a-d}) to order $e$ and substituting the results into \eqref{sec2_eq1}, the linearized relative motion model becomes
\begin{equation}\label{sec2_eq9}
\dot{\bm{x}}(t) = (\bm{A} + \upDelta\bm{A})\bm{x}(t)+\bm{B}\bm{u}(t) \ \textrm{,}
\end{equation}
\noindent where
\begin{flalign*}
\begin{split}
\indent&
\bm{A} =
\left[\begin{matrix}
0 & 0 & 0 & 1 & 0 & 0\\
0 & 0 & 0 & 0 & 1 & 0\\
0 & 0 & 0 & 0 & 0 & 1\\
3n^2 &0 & 0 & 0 & 2n & 0\\
0 & 0 & 0 & -2n & 0 & 0\\
0 & 0 & -n^2 & 0 & 0 & 0\\
\end{matrix}\right]
\textrm{,}
\qquad
\bm{B} = 
\frac{\displaystyle 1}{\displaystyle m}
\left[\begin{matrix}
0 & 0 & 0\\
0 & 0 & 0\\
0 & 0 & 0\\
1 & 0 & 0\\
0 & 1 & 0\\
0 & 0 & 1\\
\end{matrix}\right]
\textrm{,}\\
\indent&
\upDelta\bm{A} = 
\left[\begin{matrix}
0 & 0 & 0 & 0 & 0 & 0\\
0 & 0 & 0 & 0 & 0 & 0\\
0 & 0 & 0 & 0 & 0 & 0\\
10en^2\cos M & -2en^2\sin M & 0 & 0 & 4en\cos M & 0\\
2en^2\sin M & en^2\cos M & 0 & -4en\cos M & 0 & 0\\
0 & 0 & -3en^2\cos M & 0 & 0 & 0\\
\end{matrix}\right]
\textrm{.}
\end{split}&
\end{flalign*}
\noindent The time-variant and norm-bounded matrix $\upDelta\bm{A}$ is defined as the non-circularity matrix, which contains the shape information of the reference orbit. Hereby, we have finished the linearization procedures.

\begin{remark}
Compared with C-W equations \citetext{\citealp{W.H.Clohessy_JAS_1960}}, the non-circularity matrix $\upDelta\bm{A}$ makes the model~\eqref{sec2_eq9} more accurate and practical for engineering applications, while compared with the nonlinear model~\eqref{sec2_eq1}, the linearized representation of ~\eqref{sec2_eq9} makes the controller synthesis easier.
\end{remark}

\begin{remark}
Although the non-circularity matrix~$\upDelta\bm{A}$~in model~\eqref{sec2_eq9} is exactly known, it satisfies the matched condition usually employed to describe the unknown uncertainty, $\upDelta\bm{A} = \bm{DF}(t)\bm{E}$ \citetext{\citealp{P.P.Khargonekar_TAC_1990}}, which brings some convenience to controller synthesis. Therefore, this kind of time-variant matrices are sometimes treated as unknown uncertainty in some literatures, such as \cite{X.Yang_AST_2013}, \cite{Q.Wang_CCC_2014} and \cite{N.Wan_AAA_2014}.
\end{remark}

In order to construct a partially independent control scheme with the in-plane and out-of-plane controllers synthesized separately, we will decompose model~\eqref{sec2_eq9} and formulate the control problems with regard to each plane respectively in the rest of this section.

\subsubsection*{In-Plane Motion Model}
The state vector of in-plane motion is defined as $\bm{p}(t) = [x, \ y, \ \dot{x}, \ \dot{y}]^T$, and the control input vector is denoted as $\bm{u}_p(t) = [f_x, \ f_y]^T$. Then according to \eqref{sec2_eq9}, the mathematical model of in-plane motion can be extracted as
\begin{equation}\label{sec2_eq10}
\dot{\bm{p}}(t) = (\bm{A}_p + \upDelta\bm{A}_p)\bm{p}(t)+\bm{B}_p\bm{u}_p(t) \ \textrm{,}
\end{equation}
\noindent where
\begin{flalign*}
\begin{split}
\indent&
\bm{A}_p =
\left[\begin{matrix}
0 & 0 & 1 & 0\\
0 & 0 & 0 & 1\\
3n^2 & 0 & 0 & 2n\\
0 & 0 & -2n & 0\\
\end{matrix}\right]
\textrm{,}
\qquad
\bm{B}_p = 
\frac{\displaystyle 1}{\displaystyle m}
\left[\begin{matrix}
0 & 0\\
0 & 0\\
1 & 0\\
0 & 1\\
\end{matrix}\right]
\textrm{,}\\
\indent&
\upDelta\bm{A}_p =
\left[\begin{matrix}
0 & 0 & 0 & 0\\
0 & 0 & 0 & 0\\
10en^2\cos M & -2en^2\sin M & 0 & 4en\cos M\\
2en^2\sin M & en^2\cos M & -4en\cos M & 0\\
\end{matrix}\right]
\textrm{.}
\end{split}&
\end{flalign*}

\noindent The norm-bounded matrix $\upDelta\bm{A}_p$ can be factorized as
\begin{equation}\label{sec2_eq11}
\upDelta\bm{A}_p = \bm{E}_{p1}\boldsymbol{\upLambda}_p\bm{E}_{p2} \ \textrm{,}
\end{equation}
\noindent where $\bm{E}_{p1}$, $\bm{E}_{p2}$ and $\boldsymbol{\upLambda}_p$ are matrices with proper dimensions and satisfy $\boldsymbol{\upLambda}_p^T\boldsymbol{\upLambda}_p < \bm{I}$.

\subsubsection*{Out-of-Plane Motion Model}
The state vector of out-of-plane motion is defined as $\bm{q}(t) = [z, \ \dot{z}]^T$, and the control input is denoted as $u_q(t) = f_z$. According to {\hyperref[asm3]{Assumption~3}}, external disturbance $w_q(t)$ should be involved into the out-of-plane motion model extracted from \eqref{sec2_eq9}. Then the model can be expressed as
\begin{equation}\label{sec2_eq12}
\dot{\bm{q}}(t) = \left(\bm{A}_q + \upDelta\bm{A}_q\right)\bm{q}(t) + \bm{B}_{q}\left[u_q(t) + w_q(t)\right] \ \textrm{,}
\end{equation}

\noindent where
\begin{flalign*}
\begin{split}
\indent\bm{A}_q =
\left[\begin{matrix}
0 & 1\\
-n^2 & 0\\
\end{matrix}\right]
\textrm{,}
\qquad
\upDelta\bm{A}_q = 
\left[\begin{matrix}
0 & 0\\
-3en^2\cos M & 0\\
\end{matrix}\right]
\textrm{,}
\qquad
\bm{B}_{q} = 
\frac{\displaystyle 1}{\displaystyle m}
\left[\begin{matrix}
0\\
1\\
\end{matrix}\right]
\textrm{.}
\end{split}&
\end{flalign*}

\noindent The norm-bounded matrix $\upDelta\bm{A}_q$ can be factorized as
\begin{equation}\label{sec2_eq13}
\upDelta\bm{A}_q = \bm{E}_{q1}\boldsymbol{\upLambda}_q\bm{E}_{q2} \ \textrm{,}
\end{equation}
\noindent where $\bm{E}_{q1}$, $\bm{E}_{q2}$ and $\boldsymbol{\upLambda}_q$ are matrices with proper dimensions and satisfy $\boldsymbol{\upLambda}_q^T\boldsymbol{\upLambda}_q < \bm{I}$.

\begin{remark}
From equations \eqref{sec2_eq10} and \eqref{sec2_eq12}, it can be seen that the motions along the orbital plane are coupled, which means a coupled controller should be employed, while the motion along the $z$-axis can be governed by an independent controller. That is the reason why the authors name this method partially independent control scheme.
\end{remark}

\subsection{Problem Formulation}\label{sec2.2}
Robust stability, bounded propulsions and optimal cost function are the three main objectives we will consider when designing the partially independent control scheme. With these requirements, the control problems of in-plane and out-of-plane motions will be formulated successively as follows.

\subsubsection*{Control Problem for In-Plane Motion}
In order to assess the fuel and time consumptions of in-plane motion within a performance index, the quadratic cost function of in-plane motion is defined as
\begin{equation}\label{sec2_eq14}
J_p = \int_{0}^{\infty}\left[\bm{p}^T(t)\bm{Q}_p\bm{p}(t)+\bm{u}_p^T\bm{R}_p\bm{u}_p(t)\right]{\rm d}t \ \textrm{,}
\end{equation}
\noindent where the positive symmetric matrix $\bm{R}_p \in \mathbb{R}^{2\times2}$ is related to the fuel consumption; and the positive symmetric matrix $\bm{Q}_p \in \mathbb{R}^{4\times4}$ is related to the state convergence rate and the smoothness of trajectory \citetext{\citealp{Yang_JAE_2011}}. With two auxiliary matrices, $\bm{U}_{px} = [1, \ 0]^T[1, \ 0]$ and $\bm{U}_{py} = [0, \ 1]^T[0, \ 1]$, thrust constraints along the $x$- and $y$-axis can be formulated as
\begin{equation}\label{sec2_eq15}
\left| f_i \right| = \left|\bm{U}_{pi}\bm{u}(t)\right| \leq u_{pi,\max} \ \textrm{,}\qquad\left(i=x, \ y\right)\textrm{,}
\end{equation}

\noindent where $u_{pi,\max}$ are the maximum control forces that can be generated by the propellers along $i$-axis. With the motion model \eqref{sec2_eq10} and the requirements presented at the preliminary of \hyperref[sec2.2]{Section 2.2}, the control task of in-plane motion can be described as: design an anti-windup robust guaranteed cost controller such that
\begin{enumerate}[(i)]
\setcounter{enumi}{0}
\item\label{req1} In-plane motion system~\eqref{sec2_eq10} is asymptotically stable at $\bm{p}(t) = \mathbf{0}$, {\it i.e.}, the chase vehicle can eventually rendezvous with the target vehicle;
\item\label{req2} Quadratic cost function~\eqref{sec2_eq14} is minimal, {\it i.e.}, an optimal compromise between the fuel consumption and the state convergence rate shall be reached;
\item\label{req3} Control forces along the $x$- and $y$-axis should satisfy the saturation constraints~\eqref{sec2_eq15}.
\end{enumerate}

\subsubsection*{Control Problem for Out-of-Plane Motion}
In order to evaluate the fuel and time consumptions of out-of-plane motion within a performance index, the quadratic cost function for out-of-plane motion is defined as
\begin{equation}\label{sec2_eq16}
J_q = \int_{0}^{\infty}\left[\bm{q}^T(t)\bm{Q}_q\bm{q}(t)+u_q^{T}R_{q}u_q(t)\right]{\rm d}t \ \textrm{,}
\end{equation}

\noindent where $\bm{Q}_q$ and $R_q$ are the state weighting matrix and control weighting scale, which have the same functions as matrices $\bm{Q}_p$ and $\bm{R}_p$ introduced in~\eqref{sec2_eq14}. When external perturbation $w_q(t)$ is considered in~\eqref{sec2_eq12}, to keep the chase vehicle from deviating from the orbital plane, the capability of actuator $u_{q,\max}$ must be greater than the largest perturbation force $w_{q,\max}$. Moreover, to attenuate or to cancel the perturbation, out-of-plane propulsion~$u_q(t)$ should follow~$w_q(t)$ exactly; therefore, additional consideration of actuator saturation along the~$z$-axis is unnecessary. With the motion model~\eqref{sec2_eq12} and the requirements illustrated above, the control task of out-of-plane motion can be summarized as: design a robust $H_\infty$ controller such that
\begin{enumerate}[(i)]
\setcounter{enumi}{3}
\item\label{req4} Out-of-plane motion system~\eqref{sec2_eq12} is robustly stable at $\bm{q}(t) = \mathbf{0}$, {\it i.e.}, the chase vehicle can be stabilized on the reference orbital plane in the presence of non-circularity~$\upDelta\bm{A}_q$ and external perturbation~$w_q(t)$;
\item\label{req5} Quadratic cost function~\eqref{sec2_eq16} is minimal, {\it i.e.}, an optimal compromise between the fuel consumption and the state convergence rate shall be realized subject to the external perturbation~$w_q(t)$.
\end{enumerate}

\section{Partially Independent Control Scheme}\label{sec3}
In this section, an anti-windup robust guaranteed cost controller and a robust $H_\infty$ controller will be synthesized successively to construct the partially independent control scheme for spacecraft rendezvous. Firstly, a lemma that will be employed in the subsequent derivation is introduced here.
\begin{lemma}[\citealp{P.P.Khargonekar_TAC_1990}] \label{Lemma_LMI}
Given matrices $\bm{Y} = \bm{Y}^T$, $\bm{D}$ and $\bm{E}$ of appropriate dimensions,
\begin{equation}\label{sec3_eq17}
\bm{Y} + \bm{D}\bm{F}\bm{E} + \bm{E}^T\bm{F}^T\bm{D}^T < \mathbf{0} \ \textrm{,}
\end{equation}
for all  $\bm{F}$ satisfying  $\bm{F}^T\bm{F} \leq \bm{I}$, if and only if there exists a scalar $\varepsilon > 0$ such that
\begin{equation} \label{sec3_eq18}
\bm{Y} + \varepsilon\bm{D}\bm{D}^T + \varepsilon^{-1}\bm{E}^T\bm{E} < \mathbf{0} \  \textrm{.}
\end{equation}
\end{lemma}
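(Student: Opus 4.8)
The plan is to prove the equivalence \eqref{sec3_eq17}$\,\Leftrightarrow\,$\eqref{sec3_eq18} by treating the two implications separately. The sufficiency direction (\eqref{sec3_eq18}$\Rightarrow$\eqref{sec3_eq17}) is the one actually used when the robust LMIs are assembled later, so I would give it in full; the necessity direction I would organize around a scalarization followed by a single one-dimensional minimization.

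For sufficiency I would complete the square at the matrix level. For any $\varepsilon>0$, expanding the nonnegative product $\left(\sqrt{\varepsilon}\,\bm{D}^T-\varepsilon^{-1/2}\bm{F}\bm{E}\right)^{T}\!\left(\sqrt{\varepsilon}\,\bm{D}^T-\varepsilon^{-1/2}\bm{F}\bm{E}\right)\ge\mathbf{0}$ gives $\bm{D}\bm{F}\bm{E}+\bm{E}^T\bm{F}^T\bm{D}^T\le\varepsilon\,\bm{D}\bm{D}^T+\varepsilon^{-1}\bm{E}^T\bm{F}^T\bm{F}\bm{E}$. Since the admissible uncertainties satisfy $\bm{F}^T\bm{F}\le\bm{I}$, we have $\bm{E}^T\bm{F}^T\bm{F}\bm{E}\le\bm{E}^T\bm{E}$, and adding $\bm{Y}$ to both sides yields $\bm{Y}+\bm{D}\bm{F}\bm{E}+\bm{E}^T\bm{F}^T\bm{D}^T\le\bm{Y}+\varepsilon\,\bm{D}\bm{D}^T+\varepsilon^{-1}\bm{E}^T\bm{E}$ for every such $\bm{F}$. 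Thus if \eqref{sec3_eq18} holds for some $\varepsilon>0$, the right-hand side is negative definite and so is the left-hand side, which is \eqref{sec3_eq17}.

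For necessity I would first turn the matrix statement into a vector one. Condition \eqref{sec3_eq17} holds for all admissible $\bm{F}$ if and only if $\bm{x}^T\bm{Y}\bm{x}+2\,\bm{x}^T\bm{D}\bm{F}\bm{E}\bm{x}<0$ for every unit $\bm{x}$ and every $\bm{F}$ with $\bm{F}^T\bm{F}\le\bm{I}$; maximizing the bilinear term over the spectral-norm ball via $\max_{\bm{F}^T\bm{F}\le\bm{I}}(\bm{D}^T\bm{x})^T\bm{F}(\bm{E}\bm{x})=\|\bm{D}^T\bm{x}\|_2\,\|\bm{E}\bm{x}\|_2$ shows this is equivalent to
\[
\bm{x}^T\bm{Y}\bm{x}+2\,\|\bm{D}^T\bm{x}\|_2\,\|\bm{E}\bm{x}\|_2<0,\qquad\forall\,\bm{x}\neq\mathbf{0}.
\]
To exhibit a single $\varepsilon$, I would set $\bm{M}(\varepsilon)=\bm{Y}+\varepsilon\,\bm{D}\bm{D}^T+\varepsilon^{-1}\bm{E}^T\bm{E}$ and study $g(\varepsilon)=\lambda_{\max}(\bm{M}(\varepsilon))$ on $(0,\infty)$. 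As $\bm{M}$ is matrix-convex in $\varepsilon$ and $\lambda_{\max}$ is convex and nondecreasing, $g$ is convex; excluding the trivial cases $\bm{D}=\mathbf{0}$ or $\bm{E}=\mathbf{0}$ (where the displayed inequality forces $\bm{Y}<\mathbf{0}$ and a small or large $\varepsilon$ works), $g(\varepsilon)\to+\infty$ at both endpoints, so $g$ attains its minimum at an interior $\varepsilon^\star$. The target is $g(\varepsilon^\star)<0$.

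The crux is to collapse $g(\varepsilon^\star)$ onto the vector inequality. Optimality $0\in\partial g(\varepsilon^\star)$ yields a unit top eigendirection in which the derivative $\bm{M}'(\varepsilon^\star)=\bm{D}\bm{D}^T-\varepsilon^{\star-2}\bm{E}^T\bm{E}$ is balanced, i.e. $\|\bm{D}^T\bm{x}^\star\|_2^2=\varepsilon^{\star-2}\|\bm{E}\bm{x}^\star\|_2^2$; for such a direction $\varepsilon^\star\|\bm{D}^T\bm{x}^\star\|_2^2+\varepsilon^{\star-1}\|\bm{E}\bm{x}^\star\|_2^2=2\|\bm{D}^T\bm{x}^\star\|_2\|\bm{E}\bm{x}^\star\|_2$, so $g(\varepsilon^\star)=\bm{x}^{\star T}\bm{Y}\bm{x}^\star+2\|\bm{D}^T\bm{x}^\star\|_2\|\bm{E}\bm{x}^\star\|_2<0$ by the displayed inequality, giving $\bm{M}(\varepsilon^\star)<\mathbf{0}$ and hence \eqref{sec3_eq18}. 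I expect the main obstacle to be the existence of this balanced eigendirection when $\lambda_{\max}(\bm{M}(\varepsilon^\star))$ is a repeated eigenvalue: then $\lambda_{\max}$ is nonsmooth and optimality only furnishes a higher-rank $\bm{X}^\star\ge\mathbf{0}$, $\mathrm{tr}\,\bm{X}^\star=1$, supported on the top eigenspace with $\mathrm{tr}(\bm{M}'(\varepsilon^\star)\bm{X}^\star)=0$. I would resolve this by an intermediate-value argument: the function $\bm{x}\mapsto\|\bm{D}^T\bm{x}\|_2^2-\varepsilon^{\star-2}\|\bm{E}\bm{x}\|_2^2$ has $\bm{X}^\star$-average zero and is continuous on the connected unit sphere of the top eigenspace, so it vanishes at some unit top eigenvector $\bm{x}^\star$, which is exactly the balanced direction needed. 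Equivalently, one may lift to $\bm{X}=\bm{x}\bm{x}^T$ and invoke the lossless matrix S-procedure of \citet{P.P.Khargonekar_TAC_1990}, whom I would in any case cite for this direction since only sufficiency is used downstream.
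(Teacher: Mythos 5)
Your proof is correct, but be aware that the paper itself contains no proof of this lemma: it is quoted verbatim from \citet{P.P.Khargonekar_TAC_1990} and invoked as a known result (it is what justifies the passages from \eqref{sec3_eq27} to \eqref{sec3_eq28} and from \eqref{sec3_eq50} to \eqref{sec3_eq51}), so the comparison is between your argument and the cited literature rather than anything in the text. Your sufficiency half is the classical completion-of-squares argument, essentially the standard one: expanding $\left(\sqrt{\varepsilon}\,\bm{D}^T-\varepsilon^{-1/2}\bm{F}\bm{E}\right)^T\left(\sqrt{\varepsilon}\,\bm{D}^T-\varepsilon^{-1/2}\bm{F}\bm{E}\right)\geq\mathbf{0}$ and using $\bm{E}^T\bm{F}^T\bm{F}\bm{E}\leq\bm{E}^T\bm{E}$; as you observe, this is the only direction the paper actually consumes, since Theorems~1 and~2 are pure ``if'' statements. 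Your necessity half takes a genuinely different, convex-analytic route: the scalarization is exact (the maximizing $\bm{F}=\bm{u}\bm{v}^T/(\|\bm{u}\|_2\|\bm{v}\|_2)$ with $\bm{u}=\bm{D}^T\bm{x}$, $\bm{v}=\bm{E}\bm{x}$ indeed satisfies $\bm{F}^T\bm{F}\leq\bm{I}$, and the supremum is attained on the compact ball), and minimizing $g(\varepsilon)=\lambda_{\max}\left(\bm{M}(\varepsilon)\right)$ then reduces everything to exhibiting a unit top eigenvector $\bm{x}^\star$ with $\varepsilon^\star\|\bm{D}^T\bm{x}^\star\|_2^2=\varepsilon^{\star-1}\|\bm{E}\bm{x}^\star\|_2^2$, on which AM--GM holds with equality and the scalarized hypothesis gives $g(\varepsilon^\star)<0$. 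The two places where such an argument usually breaks are both handled: the nonsmooth repeated-eigenvalue case, which you resolve by an intermediate-value argument on the connected unit sphere of the top eigenspace (for a simple top eigenvalue the balance follows directly from stationarity), and the degenerate cases $\bm{D}=\mathbf{0}$ or $\bm{E}=\mathbf{0}$, without which $g$ need not be coercive at both endpoints of $(0,\infty)$. What your route buys is a complete, self-contained proof of the harder direction using only convexity, Danskin-type directional derivatives, and the intermediate value theorem, rather than deferring to the S-procedure losslessness or the quadratic-stabilizability machinery of the original reference; what the paper's citation-only treatment buys is brevity, which is defensible given that only the sufficiency direction is needed downstream.
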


\subsection{In-Plane Motion Controller}\label{sec3.1}
Consider the following state feedback control law
\begin{equation}\label{sec3_eq19}
\bm{u}_{p}(t) = - \bm{K}_p\bm{p}(t) \ \textrm{,}
\end{equation}
\noindent where $\bm{K}_p \in\mathbb{R}^{2\times4}$ is the state feedback gain matrix of in-plane motion controller. Substituting equation~\eqref{sec3_eq19} into the plant model~\eqref{sec2_eq10}, the closed-loop model for in-plane motion is
\begin{equation}\label{sec3_eq20}
\dot{\bm{p}}(t) = \left(\bm{A}_p + \upDelta\bm{A}_p - \bm{B}_p\bm{K}_p\right)\bm{p}(t) \ \textrm{.}
\end{equation}

\noindent Sufficient condition for the existence of a thrust-limited robust guaranteed cost controller is described in \hyperref[theorem1]{Theorem~1}.

\begin{theorem}\label{theorem1}
Consider the closed-loop system~\eqref{sec3_eq20} with the state feedback control law in~\eqref{sec3_eq19}. For a given initial state vector $\bm{p}(0)$, if there exist a positive symmetric matrix $\bm{X}_p \in \mathbb{R}^{4\times4}$, a matrix $\bm{Y}_p \in \mathbb{R}^{2\times4}$, positive scalars $\varepsilon_p$ and $\rho$ satisfying

\begin{equation}\label{sec3_eq21}
\left[\begin{matrix}
\text{sym}\left(\bm{A}_p\bm{X}_p - \bm{B}_p\bm{Y}_p\right) + \varepsilon_p^{}\bm{E}_{p1}^{}\bm{E}_{p1}^T & \bm{X}_p^{}\bm{E}_{p2}^T & \bm{Y}_p^T & \bm{X}_p\\
* & -\varepsilon_p\bm{I} & \mathbf{0} & \mathbf{0}\\
* & * & -\bm{R}_p^{-1} & \mathbf{0}\\
* & * & * &  -\bm{Q}_p^{-1}\\
\end{matrix}\right] < \mathbf{0} \ \textrm{,}
\end{equation}

\begin{equation}\label{sec3_eq22}
\left[\begin{matrix}
-\rho^{-1} & \rho^{-1}\bm{p}^T(0)\\
* & -\bm{X}_p\\
\end{matrix}\right] < \mathbf{0} \ \textrm{,}
\end{equation}

\begin{equation}\label{sec3_eq23}
\left[\begin{matrix}
-\rho^{-1}\bm{I} & \bm{U}_{pi}\bm{Y}_p\\
* & -u_{pi,\max}^{2}\bm{X}_p^{}\\
\end{matrix}\right] < \mathbf{0} \ \textrm{,}
\end{equation}
\noindent then there exists an in-plane motion controller such that requirements (\ref{req1}), (\ref{req2}) and (\ref{req3}) are satisfied, and positive scalar $\rho$ is an upper bound of the quadratic cost function~\eqref{sec2_eq14}.
\end{theorem}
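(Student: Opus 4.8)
The plan is to base everything on a single quadratic Lyapunov function $V(\bm{p}) = \bm{p}^T\bm{P}\bm{p}$ with $\bm{P} = \bm{X}_p^{-1} > \mathbf{0}$, and to show that \eqref{sec3_eq21} is precisely the guaranteed-cost inequality
\begin{equation*}
\dot{V}(\bm{p}) + \bm{p}^T\bm{Q}_p\bm{p} + \bm{u}_p^T\bm{R}_p\bm{u}_p < 0
\end{equation*}
re-expressed in the decision variables $\bm{X}_p$ and $\bm{Y}_p$. First I would substitute the closed-loop dynamics \eqref{sec3_eq20} together with $\bm{u}_p = -\bm{K}_p\bm{p}$, collect the quadratic form in $\bm{p}$, and read off the matrix inequality
\begin{equation*}
(\bm{A}_p + \upDelta\bm{A}_p - \bm{B}_p\bm{K}_p)^T\bm{P} + \bm{P}(\bm{A}_p + \upDelta\bm{A}_p - \bm{B}_p\bm{K}_p) + \bm{Q}_p + \bm{K}_p^T\bm{R}_p\bm{K}_p < \mathbf{0}.
\end{equation*}
Since this is quadratic in the unknowns, the natural linearization is a congruence transformation: pre- and post-multiply by $\bm{X}_p = \bm{P}^{-1}$ (which preserves definiteness) and set $\bm{Y}_p = \bm{K}_p\bm{X}_p$, turning the products $\bm{K}_p^T\bm{R}_p\bm{K}_p$ and $\bm{Q}_p$ into $\bm{Y}_p^T\bm{R}_p\bm{Y}_p$ and $\bm{X}_p\bm{Q}_p\bm{X}_p$ and the drift into $\text{sym}(\bm{A}_p\bm{X}_p - \bm{B}_p\bm{Y}_p)$.

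The central step is the non-circularity term. After the congruence it reads $\bm{E}_{p1}\boldsymbol{\upLambda}_p\bm{E}_{p2}\bm{X}_p + (\bm{E}_{p1}\boldsymbol{\upLambda}_p\bm{E}_{p2}\bm{X}_p)^T$ via factorization \eqref{sec2_eq11}, i.e. $\bm{D}\bm{F}\bm{E} + \bm{E}^T\bm{F}^T\bm{D}^T$ with $\bm{D} = \bm{E}_{p1}$, $\bm{F} = \boldsymbol{\upLambda}_p$, $\bm{E} = \bm{E}_{p2}\bm{X}_p$. Applying \hyperref[Lemma_LMI]{Lemma~2} (legitimate since $\boldsymbol{\upLambda}_p^T\boldsymbol{\upLambda}_p < \bm{I}$) eliminates the indefinite uncertainty in exchange for a scalar $\varepsilon_p > 0$, replacing it by $\varepsilon_p\bm{E}_{p1}\bm{E}_{p1}^T + \varepsilon_p^{-1}\bm{X}_p\bm{E}_{p2}^T\bm{E}_{p2}\bm{X}_p$. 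The three remaining nonlinear blocks — this last term plus $\bm{Y}_p^T\bm{R}_p\bm{Y}_p$ and $\bm{X}_p\bm{Q}_p\bm{X}_p$ — are then pushed into off-diagonal positions by three successive Schur complements against the diagonal blocks $-\varepsilon_p\bm{I}$, $-\bm{R}_p^{-1}$, $-\bm{Q}_p^{-1}$, reproducing \eqref{sec3_eq21} exactly. Because the guaranteed-cost inequality forces $\dot{V} < 0$, this already yields the asymptotic stability of requirement (\ref{req1}) and supplies the integrand estimate needed for (\ref{req2}).

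For the cost bound I would integrate $\dot{V}(\bm{p}) < -[\bm{p}^T\bm{Q}_p\bm{p} + \bm{u}_p^T\bm{R}_p\bm{u}_p]$ over $[0,\infty)$; asymptotic stability gives $\bm{p}(\infty) = \mathbf{0}$, hence $J_p < V(\bm{p}(0)) = \bm{p}^T(0)\bm{X}_p^{-1}\bm{p}(0)$. A single Schur complement of the block $-\bm{X}_p$ shows that \eqref{sec3_eq22} is equivalent to $\bm{p}^T(0)\bm{X}_p^{-1}\bm{p}(0) < \rho$, so $\rho$ is the claimed upper bound and (\ref{req2}) holds.

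Finally, for the thrust constraint (\ref{req3}) the key observation is that $V$ is strictly decreasing, so every trajectory remains in the invariant ellipsoid $\mathcal{E} = \{\bm{p} : \bm{p}^T\bm{X}_p^{-1}\bm{p} < \rho\}$, which already contains $\bm{p}(0)$ by \eqref{sec3_eq22}. Maximizing the linear functional $f_i = -\bm{U}_{pi}\bm{K}_p\bm{p}$ over $\mathcal{E}$ gives $\max_{\bm{p}\in\mathcal{E}}|f_i|^2 = \rho\,\bm{U}_{pi}\bm{K}_p\bm{X}_p\bm{K}_p^T\bm{U}_{pi}^T = \rho\,\bm{U}_{pi}\bm{Y}_p\bm{X}_p^{-1}\bm{Y}_p^T\bm{U}_{pi}^T$, and demanding this be at most $u_{pi,\max}^2$ is, by one more Schur complement, exactly \eqref{sec3_eq23}. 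I expect the principal difficulty to be bookkeeping rather than conceptual: tracking the three quadratic blocks through the congruence transformation and ordering the Schur complements to match the block pattern of \eqref{sec3_eq21}, while invoking \hyperref[Lemma_LMI]{Lemma~2} on the single indefinite cross term. The one genuinely substantive idea is recognizing that the decreasing Lyapunov function confines the state to $\mathcal{E}$, which is what lets the instantaneous saturation bound be enforced uniformly in time.
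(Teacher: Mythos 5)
Your proposal is correct and follows essentially the same route as the paper's own proof: a quadratic Lyapunov function with the guaranteed-cost inequality, Lemma~2 to absorb the uncertainty $\bm{E}_{p1}\boldsymbol{\upLambda}_p\bm{E}_{p2}$ at the price of $\varepsilon_p$, Schur complements plus the congruence/substitution $\bm{X}_p = \bm{P}_p^{-1}$, $\bm{Y}_p = \bm{K}_p\bm{X}_p$ to reach \eqref{sec3_eq21}--\eqref{sec3_eq23}, integration of $\dot{V}_p$ for the cost bound $\rho$, and confinement to the invariant ellipsoid $\{\bm{p}: \bm{p}^T\bm{X}_p^{-1}\bm{p} < \rho\}$ for the thrust constraint. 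The only differences are cosmetic orderings (you apply the congruence before Lemma~2, and you phrase the saturation step as maximizing $\bm{U}_{pi}\bm{K}_p\bm{p}$ over the ellipsoid rather than via the paper's intermediate matrix inequality $u_{pi,\max}^{-2}[\bm{U}_{pi}\bm{K}_p]^T\bm{U}_{pi}\bm{K}_p < \rho^{-1}\bm{P}_p$), which are algebraically equivalent.
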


\begin{proof}[\textbf{Proof}]
Consider the Lyapunov function $V_p(t) = \bm{p}^T(t)\bm{P}_p\bm{p}(t)$, where $\bm{P}_p\in\mathbb{R}^{4\times4}$ is a positive symmetric matrix. Substituting~\eqref{sec3_eq20} into the derivative of $V_p(t)$, we have
\begin{equation}\label{sec3_eq24}
\dot{V}_p(t) = \textrm{sym}\left[\bm{p}^T(t)\bm{P}_p\left(\bm{A}_p + \upDelta\bm{A}_p - \bm{B}_p\bm{K}_p\right)\bm{p}(t)\right] \ \textrm{.}
\end{equation}
In order to optimize the cost function \eqref{sec2_eq14} and guarantee the asymptotic stability of in-plane motion, let inequalities \eqref{sec3_eq25} hold
\begin{equation}\label{sec3_eq25}
\dot{V}_p(t) < -\left[\bm{p}^T(t)\bm{Q}_p\bm{p}(t) + \bm{u}_p^T(t)\bm{R}_p\bm{u}_p(t)\right] < 0 \ \textrm{.}
\end{equation}
Integrating~\eqref{sec3_eq25} from 0 to $\infty$ and noticing that $\bm{p}(t) \rightarrow \mathbf{0}$ as $t \rightarrow \infty$, we get
\begin{equation}\label{sec3_eq26}
0 < J_p = \int_{0}^{\infty}\left[\bm{p}^T(t)\bm{Q}_p\bm{p}(t) + \bm{u}_p^T(t)\bm{R}_p^{}\bm{u}_p^{}(t)\right]{\rm d}t \leq V_p(0) \ \textrm{.}
\end{equation}
From~\eqref{sec3_eq26}, we know that when inequalities~\eqref{sec3_eq25} hold, $V_p(0) = \bm{p}^T(0)\bm{P}_p\bm{p}(0)$ will be an upper bound of the quadratic cost function $J_p$. Substituting~\eqref{sec2_eq11}, \eqref{sec3_eq19} and \eqref{sec3_eq24} into \eqref{sec3_eq25} yields
\begin{equation}\label{sec3_eq27}
\boldsymbol{\upPsi}_p + \bm{P}_p\bm{E}_{p1}\boldsymbol{\upLambda}_p\bm{E}_{p2} + \bm{E}_{p2}^T\boldsymbol{\upLambda}_p^T\left(\bm{P}_p\bm{E}_{p1}\right)^T < \mathbf{0} \ \textrm{,}
\end{equation}
\noindent where
\begin{flalign*}
\indent
\boldsymbol{\upPsi}_p = \textrm{sym}\left[\bm{P}_p\left(\bm{A}_p - \bm{B}_p\bm{K}_p\right)\right] + \bm{Q}_p + \bm{K}_p^T\bm{R}_p^{}\bm{K}_p^{} \ \textrm{.}&&
\end{flalign*}

\noindent Since $\boldsymbol{\upPsi}_p$ is a symmetric matrix, according to \hyperref[Lemma_LMI]{Lemma~2} and \eqref{sec2_eq11}, there exists a positive scalar $\varepsilon_p$ ensuring \eqref{sec3_eq27} by
\begin{equation}\label{sec3_eq28}
\boldsymbol{\upPsi}_p + \varepsilon_p\bm{P}_p\bm{E}_{p1}\left(\bm{P}_p\bm{E}_{p1}\right)^T + \varepsilon_p^{-1}\bm{E}_{p2}^T\bm{E}_{p2}^{} < \mathbf{0} \ \textrm{.}
\end{equation}

\noindent By Schur complement, inequality~\eqref{sec3_eq28} can be rewritten in a matrix form as
\begin{equation}\label{sec3_eq29}
\left[\begin{matrix}
\boldsymbol{\upPi}_{11} & \boldsymbol{\upPi}_{12}\\
* & \boldsymbol{\upPi}_{22}\\
\end{matrix}\right] < \mathbf{0} \ \textrm{,}
\end{equation}

\noindent where
\begin{flalign*}
\begin{split}
\indent
&\boldsymbol{\upPi}_{11} = \textrm{sym}\left[\bm{P}_p\left(\bm{A}_p - \bm{B}_p\bm{K}_p\right)\right] + \varepsilon_p\bm{P}_p^{}\bm{E}_{p1}^{}\bm{E}_{p1}^{T}\bm{P}_p^{T} \ \textrm{,}\\
\indent
&\boldsymbol{\upPi}_{12} = \left[\begin{matrix} \bm{E}_{p2}^T & \bm{K}_p^T & \bm{I}\end{matrix}\right] \ \textrm{,}\\
\indent
&\boldsymbol{\upPi}_{22} = \textrm{diag}\left(-\varepsilon\bm{I}, \ -\bm{R}_p^{-1}, \ -\bm{Q}_p^{-1} \right) \ \textrm{.}
\end{split}&
\end{flalign*}
\noindent With the variable substitutions, $\bm{X}_p = \bm{P}_p^{-1}$ and $\bm{Y}_p = \bm{K}_p\bm{P}_p^{-1}$, pre- and post-multiply~\eqref{sec3_eq29} with diag$(\bm{X}_p, \ \bm{I})$, and then \eqref{sec3_eq21} in \hyperref[theorem1]{Theorem~1} is obtained. To minimize $V_p(0)$, an upper bound of $J_p$, a positive scalar $\rho$ is introduced and meets
\begin{equation}\label{sec3_eq30}
V_p(0) = \bm{p}^T(0)\bm{P}_p\bm{p}(0) \leq \rho \ \textrm{.}
\end{equation}
\noindent By Schur complement, inequality~\eqref{sec3_eq30} is equivalent to 
\begin{equation}\label{sec3_eq31}
\left[\begin{matrix}
\rho & \bm{p}^T(0)\\
* & -\bm{P}_{p}^{-1}\\
\end{matrix}\right]
< \mathbf{0} \ \textrm{.}
\end{equation}

\noindent Pre- and post-multiplying~\eqref{sec3_eq31} with diag$(\rho^{-1}, \ \bm{I})$, the LMI constraint~\eqref{sec3_eq22} in \hyperref[theorem1]{Theorem~.1} is obtained. LMIs~\eqref{sec3_eq21} and \eqref{sec3_eq22} have fulfilled the requirements (\ref{req1}) and (\ref{req2}). In order to meet the requirement (\ref{req3}), squaring both sides of~\eqref{sec2_eq15} and dividing each side by $u_{pi,\max}^2$, then there is
\begin{equation}\label{sec3_eq32}
u_{pi, \max}^{-2}\left[\bm{U}_{pi}\bm{K}_p\bm{p}(t)\right]^T\bm{U}_{pi}\bm{K}_p\bm{p}(t) \leq 1 \ \textrm{.}
\end{equation}

\noindent Dividing both sides of \eqref{sec3_eq30} by $\rho$ and considering $\dot{V}_p(t) < 0$, we have
\begin{equation}\label{sec3_eq33}
\rho^{-1}V_p(t) < \rho^{-1}V_p(0) \leq 1 \ \textrm{.}
\end{equation}

\noindent Then we can guarantee the inequality~\eqref{sec3_eq32} by
\begin{equation}\label{sec3_eq34}
u_{pi, \max}^{-2}\left[\bm{U}_{pi}\bm{K}_p\right]^T\bm{U}_{pi}\bm{K}_p < \rho^{-1}\bm{P}_p \ \textrm{.}
\end{equation}

\noindent By Schur complement, inequality~\eqref{sec3_eq34} can be rewritten as
\begin{equation}\label{sec3_eq35}
\left[\begin{matrix}
-\rho^{-1}\bm{I} & \bm{U}_{pi}\bm{K}_p\\
* & -u_{pi,\max}^2\bm{P}_p\\
\end{matrix}\right] < \mathbf{0} \ \textrm{.}
\end{equation}
\noindent Pre- and post-multiplying~\eqref{sec3_eq35} with diag$(\bm{I}, \ \bm{X}_p)$, the LMI constraint~\eqref{sec3_eq23} in \hyperref[theorem1]{Theorem~1} is obtained. This completes the proof.
\end{proof}

It can be inferred from \eqref{sec3_eq30} that the quadratic cost function $J_p$ will be optimal if the positive scalar $\rho$ is minimized. Therefore, another positive scalar $\sigma$ is introduced and meets $\sigma > \rho$, which is equivalent to
\begin{equation}\label{sec3_eq36}
\left[\begin{matrix}
-\sigma & 1\\
1 & -\rho^{-1}\\
\end{matrix}\right]
< \mathbf{0} \ \textrm{.}
\end{equation}

\noindent \noindent Then combining \hyperref[theorem1]{Theorem~1} and \eqref{sec3_eq36}, the thrust-limited robust guaranteed cost controller for in-plane motion with initial state $\bm{p}(0)$ can be obtained by solving the following convex optimization problem
\begin{equation}\label{sec3_eq37}
\min_{\varepsilon_p, \ \rho^{-1}, \ \bm{X}_p, \ \bm{Y}_p} \sigma \ \textrm{,}
\end{equation}
\nolinenumbers
\begin{center}
s.t. \eqref{sec3_eq21}, \eqref{sec3_eq22}, \eqref{sec3_eq23} and \eqref{sec3_eq36}.
\end{center}

\noindent The state feedback gain matrix $\bm{K}_p$ can be solved by $\bm{K}_p^{} = \bm{Y}_p^{}\bm{X}_p^{-1}$. 
\begin{remark}
Since no preassigned parameter is needed in \hyperref[theorem1]{Theorem~1}, the motion controller obtained from \eqref{sec3_eq37} is less conservative and therefore more practical than the controllers employed in \cite{H.Gao_TCST_2009} and \cite{X.Yang_AST_2013} when implemented to spacecraft rendezvous, which can be drawn from the minimum feasible upper bounds of actuators.
\end{remark}

\subsection{Out-of-Plane Motion Controller}
Consider the following state feedback control law
\begin{equation}\label{sec3_eq38}
u_q(t) = -\bm{K}_q\bm{q}(t) \ \textrm{,}
\end{equation}

\noindent where $\bm{K}_q \in \mathbb{R}^{1\times2}$ is the state feedback gain matrix of the out-of-plane motion controller. Substituting~\eqref{sec3_eq38} into the plant model~\eqref{sec2_eq12}, the closed-loop model for out-of-plane motion is
\begin{equation}\label{sec3_eq39}
\dot{\bm{q}}(t) = \left(\bm{A}_q + \upDelta\bm{A}_q - \bm{B}_q\bm{K}_q\right)\bm{q}(t) + \bm{B}_qw_q(t) \ \textrm{.}
\end{equation}

\noindent To optimize cost function $J_q$ in the presence of external disturbance $w_q(t)$, define a controlled output as
\begin{equation}\label{sec3_eq40}
z_q(t) = \bm{Q}_q^{\frac{1}{2}}\bm{q}(t) + R_q^{\frac{1}{2}}u_q(t) \ \textrm{.}
\end{equation}

\noindent Then requirement (\ref{req5}) can be fulfilled by minimizing $\| z_q(t) \|_2$, which is assumed to be bounded by
\begin{equation}\label{sec3_eq41}
\| z_q(t) \|_2 \leq \gamma \| w(t) \|_2 \ \textrm{,}
\end{equation}

\noindent where $\gamma$ is the $H_\infty$ performance. Sufficient condition for the existence of a robust $H_\infty$ controller is given in \hyperref[theorem2]{Theorem~2}.

\begin{theorem}\label{theorem2}
Consider the closed-loop system~\eqref{sec3_eq39} with the state feedback control law in~\eqref{sec3_eq38}. If there exist a positive symmetric matrix $\bm{X}_q \in \mathbb{R}^{2\times2}$, a matrix $\bm{Y}_q\in\mathbb{R}^{1\times2}$ and a positive scalar $\varepsilon_q$ satisfying
\begin{equation}\label{sec3_eq42}
\left[\begin{matrix}
\textrm{sym}\left(\bm{A}_q\bm{X}_q - \bm{B}_q\bm{Y}_q\right) + \varepsilon_q^{}\bm{E}_{q1}^{}\bm{E}_{q1}^T & \bm{B}_q & \bm{X}_q^{}\bm{E}_{q2}^T & \mathbf{0} & \bm{Y}_q^T & \bm{X}_q\\
* & -\gamma^2\bm{I} & \mathbf{0} & \mathbf{0} & \mathbf{0} & \mathbf{0}\\
* & * & -\varepsilon_q\bm{I} & \mathbf{0} & \mathbf{0} & \mathbf{0}\\
* & * & * & -\varepsilon_q\bm{I} & \mathbf{0} & \mathbf{0}\\
* & * & * & * & -R_q^{-1}\bm{I} & \mathbf{0}\\
* & * & * & * & * & -\bm{Q}_q^{-1}\\
\end{matrix}\right] < \mathbf{0} \  \textrm{,}
\end{equation}
\noindent then there exists an in-plane motion controller such that requirements (\ref{req4}) and (\ref{req5}) are satisfied.
\end{theorem}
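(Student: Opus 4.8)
The plan is to follow the same guaranteed-cost philosophy used in \hyperref[theorem1]{Theorem~1}, but to replace the pure Lyapunov-decrease condition with a bounded-real (dissipation) inequality that simultaneously enforces robust stability, the $H_\infty$ attenuation bound~\eqref{sec3_eq41}, and optimality of the cost $J_q$. First I would take the Lyapunov function $V_q(t) = \bm{q}^T(t)\bm{P}_q\bm{q}(t)$ with $\bm{P}_q\in\mathbb{R}^{2\times2}$ positive symmetric, and demand the dissipation inequality $\dot{V}_q(t) + z_q^T(t)z_q(t) - \gamma^2 w_q^T(t)w_q(t) < 0$ along the trajectories of the closed-loop system~\eqref{sec3_eq39}. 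With the controlled output~\eqref{sec3_eq40} read so that $z_q^T z_q = \bm{q}^T\bm{Q}_q\bm{q} + u_q^T R_q u_q$, this single inequality encodes both targets: setting $w_q \equiv \mathbf{0}$ recovers $\dot{V}_q < 0$ and hence robust stability at $\bm{q}=\mathbf{0}$ (requirement~(\ref{req4})), while integrating it from $0$ to $\infty$ under the zero initial condition furnished by \hyperref[asm2]{Assumption~2} yields $\|z_q\|_2 \leq \gamma\|w_q\|_2$, i.e.\ requirement~(\ref{req5}).

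Second, I would substitute the closed-loop dynamics~\eqref{sec3_eq39} and the feedback law~\eqref{sec3_eq38} into $\dot{V}_q$ and collect the result as one quadratic form in the augmented vector $[\bm{q}^T(t),\ w_q^T(t)]^T$. Its $(1,1)$ block carries $\textrm{sym}[\bm{P}_q(\bm{A}_q + \upDelta\bm{A}_q - \bm{B}_q\bm{K}_q)] + \bm{Q}_q + \bm{K}_q^T R_q \bm{K}_q$, the off-diagonal block is $\bm{P}_q\bm{B}_q$, and the $(2,2)$ block is $-\gamma^2\bm{I}$. The uncertainty enters only through the term $\textrm{sym}[\bm{P}_q\bm{E}_{q1}\boldsymbol{\upLambda}_q\bm{E}_{q2}]$ coming from the factorization~\eqref{sec2_eq13}; identifying $\bm{P}_q\bm{E}_{q1}$ with the $\bm{D}$ matrix and $\bm{E}_{q2}$ with the $\bm{E}$ matrix of \hyperref[Lemma_LMI]{Lemma~2}, I would absorb it into $\varepsilon_q\bm{P}_q\bm{E}_{q1}\bm{E}_{q1}^T\bm{P}_q + \varepsilon_q^{-1}\bm{E}_{q2}^T\bm{E}_{q2}$ for some scalar $\varepsilon_q>0$, thereby removing the unknown $\boldsymbol{\upLambda}_q$ and leaving a matrix inequality in $\bm{P}_q$, $\bm{K}_q$, $\gamma$ and $\varepsilon_q$ only.

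Third, I would linearize. Repeated use of the Schur complement pulls the quadratic penalties $\bm{Q}_q$ and $R_q$, the disturbance channel, and the $\varepsilon_q^{-1}$ uncertainty contribution out of the $(1,1)$ block into the separate diagonal entries $-\bm{Q}_q^{-1}$, $-R_q^{-1}\bm{I}$, $-\gamma^2\bm{I}$ and the auxiliary blocks governed by $-\varepsilon_q\bm{I}$ that appear in~\eqref{sec3_eq42}. The inequality is still nonlinear because of the products $\bm{P}_q\bm{B}_q\bm{K}_q$, $\bm{K}_q^T R_q\bm{K}_q$ and $\varepsilon_q\bm{P}_q\bm{E}_{q1}\bm{E}_{q1}^T\bm{P}_q$; performing the congruence transformation by $\textrm{diag}(\bm{X}_q,\ \bm{I},\ \ldots)$ together with the substitutions $\bm{X}_q = \bm{P}_q^{-1}$ and $\bm{Y}_q = \bm{K}_q\bm{P}_q^{-1}$ turns each such product into an affine expression in the decision variables, which is exactly the LMI~\eqref{sec3_eq42}. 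Feasibility of~\eqref{sec3_eq42} then returns the gain through $\bm{K}_q = \bm{Y}_q\bm{X}_q^{-1}$.

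I expect the principal difficulty to be bookkeeping rather than conceptual: keeping the several Schur-complement expansions and the congruence transformation mutually consistent so that each $\bm{X}_q$ factor lands on the correct block—the uncertainty-output coupling becoming $\bm{X}_q\bm{E}_{q2}^T$, the control and state channels becoming $\bm{Y}_q^T$ and $\bm{X}_q$, while the disturbance block $\bm{B}_q$ is left untouched because it is not post-multiplied by $\bm{X}_q$. Care is also needed to justify that the output definition~\eqref{sec3_eq40} contributes precisely the two channels $-R_q^{-1}\bm{I}$ and $-\bm{Q}_q^{-1}$, and to confirm that the \emph{strict} dissipation inequality is integrable to the finite-gain bound~\eqref{sec3_eq41} under the zero initial state of \hyperref[asm2]{Assumption~2}.
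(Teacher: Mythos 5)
Your proposal is correct and follows essentially the same route as the paper's own proof: the same Lyapunov function $V_q(t)=\bm{q}^T(t)\bm{P}_q\bm{q}(t)$, the same dissipation inequality $\dot{V}_q + z_q^Tz_q - \gamma^2 w_q^Tw_q < 0$ (the paper's inequality~\eqref{sec3_eq47}) yielding requirement~(\ref{req4}) by setting $w_q\equiv 0$ and requirement~(\ref{req5}) by integration under the zero-initial condition of \hyperref[asm2]{Assumption~2}, the same absorption of the uncertainty via \hyperref[Lemma_LMI]{Lemma~2} with $\bm{D}=\bm{P}_q\bm{E}_{q1}$ and $\bm{E}=\bm{E}_{q2}$, and the same Schur-complement expansions plus congruence by $\textrm{diag}(\bm{X}_q,\bm{I},\ldots)$ with $\bm{X}_q=\bm{P}_q^{-1}$, $\bm{Y}_q=\bm{K}_q\bm{P}_q^{-1}$. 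The only cosmetic difference is that the paper pads the uncertainty factors into block matrices $\boldsymbol{\upDelta}_q$, $\boldsymbol{\upPhi}_q$, $\mathbf{E}_q$ matching the augmented $[\bm{q}^T,\ w_q^T]^T$ coordinates before applying the lemma, whereas you apply it to the unpadded factors directly.
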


\begin{proof}[\textbf{Proof}]
Consider the Lyapunov function $V_q(t) = \bm{q}^T(t)\bm{P}_q\bm{q}(t)$, where $\bm{P}_q \in \mathbb{R}^{2\times2}$ is a positive symmetric matrix. Substituting~\eqref{sec3_eq39} into the derivative of $V_q(t)$, there is
\begin{equation}\label{sec3_eq43}
\dot{V}_q(t) = \left[\begin{matrix} \bm{q}(t) \\ w_q(t) \end{matrix}\right]^T
\left[\begin{matrix}
\textrm{sym}\left[\bm{P}_p\left(\bm{A}_q + \upDelta\bm{A}_q - \bm{B}_q\bm{K}_q  \right)\right] & \bm{P}_q\bm{B}_q\\
* & 0 \\
\end{matrix}\right]
\left[\begin{matrix} \bm{q}(t) \\ w_q(t) \end{matrix} \right] \ \textrm{.}
\end{equation}

\noindent Assuming external disturbance $w_q(t)$ to be 0, the derivate of $V_q(t)$ becomes
\begin{equation}\label{sec3_eq44}
\dot{V}_{q0}(t) = \textrm{sym}\left[\bm{q}^T(t)\bm{P}_q\left(\bm{A}_q + \upDelta\bm{A}_q - \bm{B}_q\bm{K}_q\right)\bm{q}(t)\right] \ \textrm{.}
\end{equation}

\noindent Squaring both sides of~\eqref{sec3_eq41}, there is
\begin{equation}\label{sec3_eq45}
z_q^T(t)z_q(t) - \gamma^2w_q^T(t)w_q(t) \leq 0 \ \textrm{.}
\end{equation}

\noindent Integrating~\eqref{sec3_eq45} from 0 to $\infty$, we have
\begin{equation}\label{sec3_eq46}
\int_{0}^{\infty}\left[z_q^T(t)z_q(t) - \gamma^2w_q^T(t)w_q(t) + \dot{V}_q(t)\right]dt + V_q(0) - V_q(\infty) \leq 0 \ \textrm{.}
\end{equation}

\noindent According to \hyperref[asm2]{Assumption~2} of zero-initial condition and the fact $V_q(\infty) > 0$, inequalities~\eqref{sec3_eq41}, \eqref{sec3_eq45} and \eqref{sec3_eq46} can be guaranteed by
\begin{equation}\label{sec3_eq47}
z_q^T(t)z_q(t) - \gamma^2w_q^T(t)w_q(t) + \dot{V}_q(t) \leq 0 \ \textrm{.}
\end{equation}

\noindent  Substituting \eqref{sec3_eq40} and \eqref{sec3_eq43} into \eqref{sec3_eq47}, we can obtain
\begin{equation}\label{sec3_eq48}
\left[\begin{matrix}
\textrm{sym}\left[\bm{P}_p\left(\bm{A}_q + \upDelta\bm{A}_q - \bm{B}_q\bm{K}_q\right)\right] + \bm{Q}_q + \bm{K}_q^TR_q\bm{K}_q & \bm{P}_q\bm{B}_q\\
* & -\gamma^2\bm{I}\\
\end{matrix}\right] < \mathbf{0} \ \textrm{.}
\end{equation}

\noindent By Schur complement, inequality~\eqref{sec3_eq48} can be rewritten as
\begin{equation}\label{sec3_eq49}
\boldsymbol{\upTheta}_1 < \boldsymbol{\upTheta}_2 \ \textrm{,}
\end{equation}

\noindent where
\begin{flalign*}
\begin{split}
\indent&
\boldsymbol{\upTheta}_1 = \textrm{sym}\left[\bm{P}_q\left(\bm{A}_q + \upDelta\bm{A}_q - \bm{B}_q\bm{K}_q\right)\right] \ \textrm{,}\\
\indent&
\boldsymbol{\upTheta}_2 = -\bm{Q}_q - \bm{K}_q^TR_q\bm{K}_q - \gamma^{-2}\bm{P}_q\bm{B}_q\left(\bm{P}_q\bm{B}_q\right)^T \ \textrm{.}
\end{split}&
\end{flalign*}

\noindent From \eqref{sec3_eq49}, we can learn that $\boldsymbol{\upTheta}_2 < \mathbf{0}$; thus $\boldsymbol{\upTheta}_1 < \mathbf{0}$ and $\dot{V}_{q0} < 0$, {\it i.e.}, inequality \eqref{sec3_eq48} guarantees the stabilities of the nominal model (without disturbance) as well as the perturbed model \eqref{sec2_eq12}, which fulfills requirement (\ref{req4}). Substituting ~\eqref{sec2_eq13} into ~\eqref{sec3_eq48}, we have
\begin{equation}\label{sec3_eq50}
\boldsymbol{\upPsi}_q + \boldsymbol{\upDelta}_q\boldsymbol{\upPhi}_q\mathbf{E}_q + \mathbf{E}_q^T\boldsymbol{\upPhi}_q^T\boldsymbol{\upDelta}_q^T  < \mathbf{0} \ \textrm{,}
\end{equation}

\noindent where
\begin{flalign*}
\begin{split}
\indent&
\boldsymbol{\upPsi}_q = \left[\begin{matrix}
\textrm{sym}\left[\bm{P}_q\left(\bm{A}_q - \bm{B}_q\bm{K}_q\right)\right] + \bm{Q}_q + \bm{K}_q^TR_q\bm{K}_q &  \bm{P}_q\bm{B}_q\\
* & -\gamma^2\bm{I}\\
\end{matrix}\right] \ \textrm{,}\\
\indent&
\boldsymbol{\upDelta}_q = 
\left[\begin{matrix}
\bm{P}_q\bm{E}_{q1} & \mathbf{0}\\
\mathbf{0} &  \mathbf{0}\\
\end{matrix}\right] \ \textrm{,}
\qquad
\boldsymbol{\upPhi}_q = \left[\begin{matrix}
\boldsymbol{\upLambda}_q & \mathbf{0}\\
\mathbf{0} & \mathbf{0}\\
\end{matrix}\right] \ \textrm{,}
\qquad
\mathbf{E}_q = \left[\begin{matrix}
\bm{E}_{q2} & \mathbf{0}\\
\mathbf{0} & \mathbf{0}\\
\end{matrix}\right] \ \textrm{.}
\end{split}&
\end{flalign*}

\noindent Since $\boldsymbol{\upPsi}_q$ is a symmetric matrix, according to \hyperref[Lemma_LMI]{Lemma~2} and \eqref{sec2_eq13}, there exists a positive scalar $\varepsilon_q$ ensuring~\eqref{sec3_eq50} by
\begin{equation}\label{sec3_eq51}
\boldsymbol{\upPsi}_q + \varepsilon_q\boldsymbol{\upDelta}_q^{}\boldsymbol{\upDelta}_q^{T} + \varepsilon_q^{-1}\mathbf{E}_q^{T}\mathbf{E}_q < \mathbf{0} \ \textrm{.}
\end{equation}

\noindent By Schur complement, inequality~\eqref{sec3_eq51} is equivalent to
\begin{equation}\label{sec3_eq52}
\left[\begin{matrix}
\boldsymbol{\upOmega}_{11} & \boldsymbol{\upOmega}_{12}\\
* & \boldsymbol{\upOmega}_{22}\\
\end{matrix}\right] < \mathbf{0} \ \textrm{,}
\end{equation}

\noindent where
\begin{flalign*}
\begin{split}
\indent&
\boldsymbol{\upOmega}_{11} = 
\left[\begin{matrix}
\textrm{sym}\left[\bm{P}_q\left(\bm{A}_q - \bm{B}_q\bm{K}_q\right)\right] + \varepsilon_q^{}\bm{P}_q^{}\bm{E}_{q1}^{}\bm{E}_{q1}^T\bm{P}_q^T & \bm{P}_q\bm{B}_q \\
* & -\gamma^2\bm{I}\\
\end{matrix}\right] \textrm{,}
\qquad
\boldsymbol{\upOmega}_{12} = 
\left[\begin{matrix}
\bm{E}_{q2}^T & \mathbf{0} & \bm{K}_q^T & \bm{I}\\
\mathbf{0} & \mathbf{0} & \mathbf{0} & \mathbf{0}\\
\end{matrix}\right] \textrm{,}\\
\indent&
\boldsymbol{\upOmega}_{22} = \textrm{diag}\left(-\varepsilon\bm{I}, \ \varepsilon\bm{I}, \ -R_q^{-1}, -\bm{Q}_q^{-1}  \right) \textrm{.}
\end{split}&
\end{flalign*}

\noindent Define the variable substitutions $\bm{X}_q = \bm{P}_q^{-1}$ and $\bm{Y}_q = \bm{K}_q\bm{P}_q^{-1}$. Pre- and post-multiplying~\eqref{sec3_eq52} with diag$(\bm{X}_q, \bm{I})$, the LMI constraint~\eqref{sec3_eq42} is obtained. This completes the proof.
\end{proof}

\begin{remark}
In the proof of \hyperref[{theorem2}]{Theorem~2}, zero-initial condition has been utilized to synthesize the robust $H_\infty$ controller, which is a reasonable simplification for the engineering problem described in this paper. However, for the situation when zero-initial condition is not satisfied, some extended robust $H_\infty$ control methods can be adopted, which have been discussed by \cite{P.P.Khargonekar_SIAMJCO_1991}, \citet{T.Namerikawa_ACC_2002}, \cite{A.V.Savkin_TAES_2003} and \cite{Y.K.Foo_TCS_2006}. Nevertheless, due to the implicit expressions of $H_\infty$ performance and more rigorous assumptions, extended robust $H_\infty$ controllers are not frequently employed in the existing literatures.
\end{remark}

The robust $H_\infty$ controller for out-of-plane motion can be obtained by solving the following convex optimization problem 
\begin{equation}\label{sec3_eq53}
\min_{\varepsilon_q, \ \bm{X}_q, \ \bm{Y}_q}\gamma \ \textrm{,}
\end{equation}
\nolinenumbers
\begin{center}
s.t. \eqref{sec3_eq42} .
\end{center}

\noindent State feedback gain matrix $\bm{K}_q$ can be determined by  $\bm{K}_q = \bm{Y}_q^{}\bm{X}_q^{-1}$. With the state feedback gain matrices $\bm{K}_p$ and $\bm{K}_q$ solved from \eqref{sec3_eq37} and \eqref{sec3_eq53}, a partially independent control scheme for spacecraft rendezvous can be constructed, and we will discuss this procedure detailedly in the next section with an illustrative example.

\section{Illustrative Example}\label{sec4}
In this section, a comparison between the partially independent and coupled control schemes will be conducted to illustrate the advantages of the former. All the simulation results were obtained from a two-body model:

\begin{subequations}\label{sec4_eq1}
\nolinenumbers\begin{equation}
\ddot{\bm{r}} + \frac{\mu}{r^3}\bm{r} = \bf{0}\textrm{,}
\end{equation}
\begin{equation}
\ddot{\bm{r}}_c + \frac{\mu}{r^3_c}\bm{r}_c = \frac{\bm{u}+\bm{w}}{m}\textrm{,}
\end{equation}
\end{subequations}

\noindent where the position vectors $\bm{r}$ and $\bm{r}_c$ have been defined in \hyperref[fig1]{Figure 1} and satisfy $\bm{r}_c - \bm{r} = \bm{x}(t)$; $m$ and $\bm{u}$ are the mass and control vector of the chase vehicle; and $\bm{w}$ is the disturbance, which consists of long and short period perturbations along the $z-$axis.

Consider a rendezvous scenario as follows. A target vehicle is in a low earth orbit (LEO) with eccentricity $e = 0.05$ and semimajor axis $a = 7082.253$ km; then we can figure out that the mean motion of the target vehicle is $n = 1.059\times10^{-3}$ rad/s, {\it i.e.}, the period of the reference orbit is $T = 5931.53$ s; the initial state vector is $x(0) = \left[-5000, 5000, 0, 5, -5, 0\right]$, and the mass of the chase vehicle is $m = 500$ kg. When solving the convex problem \eqref{sec3_eq37}, the minimum feasible upper bound of the in-plane propulsion is $6.8$ N; nevertheless, considering the discrepancy between the plant models (\ref{sec2_eq10}, \ref{sec2_eq12}) and simulation model (\hyperref[sec4_eq1]{54a-b}), the upper bounds of the in-plane propulsions are set $u_{px,\max} = u_{py,\max} = 15$ N to guarantee the robustness of the controllers. In \eqref{sec2_eq12}, consider an extreme unknown out-of-plane disturbance that may rarely exist in reality
\begin{equation}\label{sec4_eq54}
w_q(t) = 4.3\sin\left(1.059\times10^{-3}t\right) + 0.5\sin\left(0.1059t\right) \textrm{.}
\end{equation}

\noindent where the first term represents long period perturbation caused by the nonhomogeneity of central planet and gravitational forces from other celestial bodies, {\it etc}; while the second term represents short period perturbation caused by the solar wind, atmospheric drag, {\it etc}. Therefore, in this example, the upper bound of the out-of-plane propulsion is set $\bm{u}_{q,\max} = 5$~N, which is greater than the maximum disturbance $w_{q,\max} \approx 4.8$~N. All the weighting matrices and scalar, $\bm{Q}_p, \bm{Q}_q, \bm{R}_p \textrm{ and } R_q $, are assigned to be units. With these parameters, a partially independent controller and a coupled controller for comparison are to be solved in the following sections.

\subsection{Partially Independent Controller}
The partially independent control scheme can be synthesized by solving \eqref{sec3_eq37} and \eqref{sec3_eq53}. For in-plane motion controller \eqref{sec3_eq37}, the initial state vector is $\bm{p}(0) = \left[-5000, \ 5000, \ 5, \ -5\right]^T$, and the matrices $\bm{E}_{p1}$, $\bm{E}_{p2}$ and $\boldsymbol{\upLambda}_{p}$ in \eqref{sec2_eq11} are assigned as follows:

\begin{align}\label{sec4_eq55}
\bm{E}_{p1} = \left[\begin{matrix}
0 & 0 & 0 & 0\\
0 & 0 & 0 & 0\\
0 & 2e & 4e & 0\\
2e & 0 & 0 & 4e\\
\end{matrix}\right] \textrm{,}
\qquad
\bm{E}_{p2} = \left[\begin{matrix}
n^2 & 0 & 0 & 0\\
0 & n^2 & 0 & 0\\
2.5n^2 & 0 & 0 & n\\
0 & 0.25n^2 & -n & 0\\
\end{matrix}\right] \textrm{,}
\end{align}

\vspace{-1.5em}
\begin{align*}
\boldsymbol{\upLambda}_{p} = \textrm{diag}\left(\sin M, \ -\sin M, \ \cos M, \ \cos M \right) \textrm{,}
\end{align*}

\noindent where the mean anomaly $M = nt$. Then solving \eqref{sec3_eq37}, the state feedback gain matrix for in-plane motion controller is obtained
\begin{equation}\label{sec4_eq56}
\bm{K}_p = \left[\begin{matrix}
\bm{K}_{p,11} & \bm{K}_{p,12}\\
\end{matrix} \right] 
= \left[\begin{matrix}
0.0024 & -0.0013 & 0.7535 & 0.0593\\
0.0015 & 0.0010 & 0.2952 & 1.3332\\
\end{matrix}\right]\textrm{.}
\end{equation}
\noindent where $\bm{K}_{p,11} \textrm{ and } \bm{K}_{p,12} \in \mathbb{R}^{2\times2}$. For the out-of-plane motion controller~\eqref{sec3_eq53}, the initial state vector is $\bm{q}(0) = [0, \ 0]^T$, and the matrices $\bm{E}_{q1}$, $\bm{E}_{q2}$ and $\boldsymbol{\upLambda}_{q}$ in \eqref{sec2_eq13} are assigned as follows:
\begin{equation}\label{sec4_eq57}
\bm{E}_{q1} = \left[\begin{matrix}
0 & 0\\
6e & 0\\
\end{matrix}\right]\textrm{,}
\qquad
\bm{E}_{q2} = \left[\begin{matrix}
n^2 & 0\\
0 & 0\\
\end{matrix}\right]\textrm{,}
\qquad
\boldsymbol{\upLambda}_{q} = \left[\begin{matrix}
-0.5\cos M & 0\\
0 & 0\\
\end{matrix}\right]\textrm{.}
\end{equation}

\noindent Solving \eqref{sec3_eq53}, the optimal $H_\infty$ performance is $\gamma = 1.000778383$, and the state feedback gain matrix for out-of-plane motion controller is
\begin{equation}\label{sec4_eq58}
\bm{K}_q  
= \left[\begin{matrix} K_{q,11} & K_{q,12} \end{matrix}\right]
= \left[\begin{matrix} 196.8030 & 5.8353\times10^4 \end{matrix}\right] \textrm{.}
\end{equation}

\noindent Combing \eqref{sec4_eq56} and \eqref{sec4_eq58} together, the state feedback gain matrix for partially independent controller is
\begin{equation}\label{sec4_eq59}
\bm{K}_{pic} = \left[\begin{matrix} \bm{K}_{p,11} & \mathbf{0}_{2\times1} & \bm{K}_{p,12} & \mathbf{0}_{2\times1}\\
\mathbf{0}_{1\times2} & K_{q,11} & \mathbf{0}_{1\times2} & K_{q,12}\\
\end{matrix}\right] \textrm{.}
\end{equation}

\noindent where $\bm{K}_{pic} \in \mathbb{R}^{3\times6}$, and the control vector in (\hyperref[sec4_eq1]{54b}) is generated by $\bm{u}_{pic}(t) = -\bm{K}_{pic}\bm{x}(t)$.

\subsection{Coupled Controller}
In \hyperref[sec3]{Section 3}, the in-plane motion controllers for $x-$ and $y-$axis were synthesized jointly, while the out-of-plane motion controller was designed independently. To verify the advantages of this scheme in robustness, we will introduce a coupled rendezvous controller in this section for comparison. The coupled control scheme synthesizes $x-$, $y-$ and $z-$axis controllers together and meets the requirements similar as \eqref{req1}, \eqref{req2} and \eqref{req3}; therefore, the coupled controller can be attained by solving a convex optimization problem similar as \hyperref[theorem1]{Theorem~1}. For brevity, the result of coupled control scheme will be given directly, while the detailed derivations of it will not be included in this paper. However, some similar procedures for synthesizing a coupled controller can be found in \cite{X.Yang_AST_2013}, \cite{D.Sheng_MPE_2014} and \cite{N.Wan_MPE_2013, N.Wan_AAA_2014}. With the same parameters assigned  in previous sections, the control vector in (\hyperref[sec4_eq1]{54b}) for coupled control scheme is generated by $\bm{u}_{cc}(t) = -\bm{K}_{cc}\bm{x}(t)$, where the state feedback gain matrix $\bm{K}_{cc}$ is
\begin{equation}\label{sec4_eq60}
\bm{K}_{cc} = \left[\begin{matrix}
0.0024 & -0.0014 & 2.1542\times10^{-4} & 0.8445 & 0.0467 & 0.1198\\
0.0017 & 7.487\times10^{-4} & -4.3822\times10^{-4} & 0.5689 & 1.3525 & 0.1901\\
3.5306\times10^{-4} & -2.0446\times10^{-4} & 5.2548\times10^{-4} & 0.1792 & 0.0234 & 0.7065\\
\end{matrix}\right] \textrm{,}
\end{equation}

\subsection{Simulation Results}
All the simulation data are collected from the two-body model (\hyperref[sec4_eq1]{4.1a-b}), which is more adjacent to the practical circumstance than plant models \eqref {sec2_eq9}, \eqref {sec2_eq10} and \eqref {sec2_eq12}. The simulation results of in-plane and out-of-plane motions will be shown successively as follows.

\subsubsection*{In-Plane Motion}
The relative in-plane trajectories of the chase vehicles with different control schemes are depicted in \hyperref[fig2]{Figure 2}. The in-plane distances and control propulsions of the chase vehicle with partially independent control scheme are illustrated in \hyperref[fig3]{Figure 3} and \hyperref[fig4]{Figure 4} respectively.

\setlength{\abovecaptionskip}{-3pt} 
\setlength{\belowcaptionskip}{-10pt}
\begin{figure}[htpb]
\centering
\includegraphics[width=0.5\textwidth]{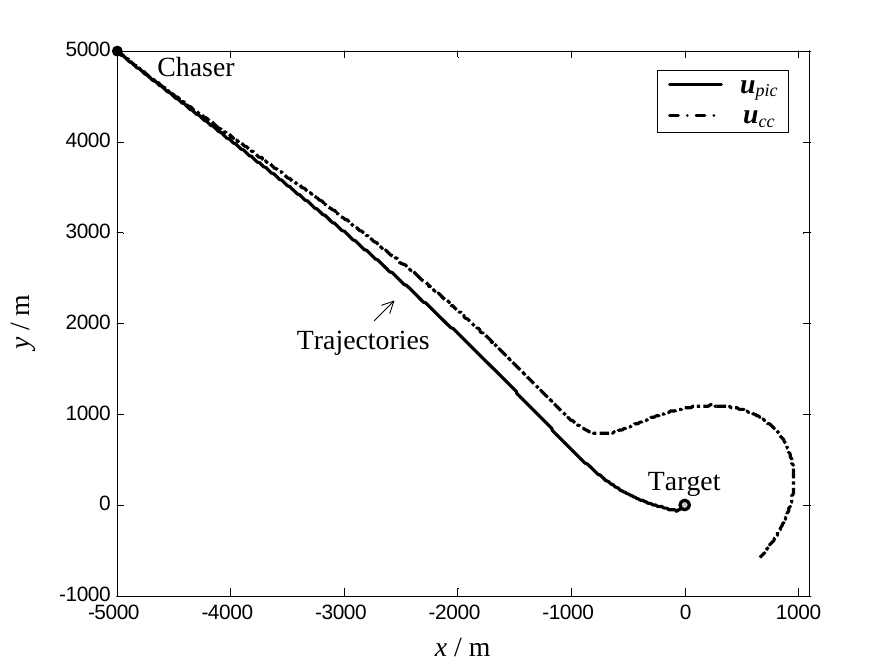}
\caption{In-plane rendezvous trajectories in first 5000 s.}\label{fig2}
\end{figure}
\begin{figure}[htpb]
\centering
\includegraphics[width=0.5\textwidth]{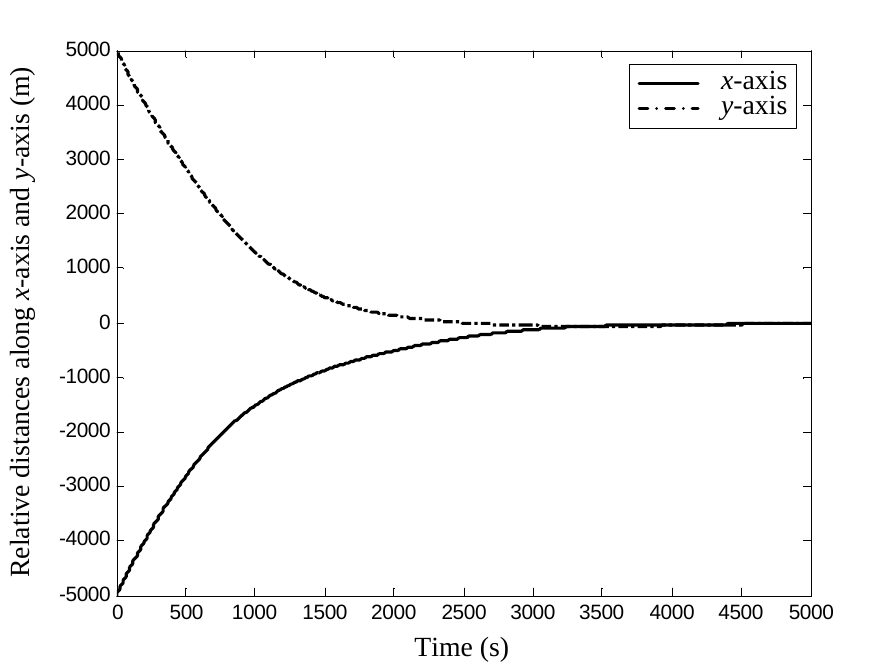}
\caption{In-plane relative distances between two spacecraft in the first 5000 s.}\label{fig3}
\end{figure}
\begin{figure}[htpb]
\centering
\includegraphics[width=0.5\textwidth]{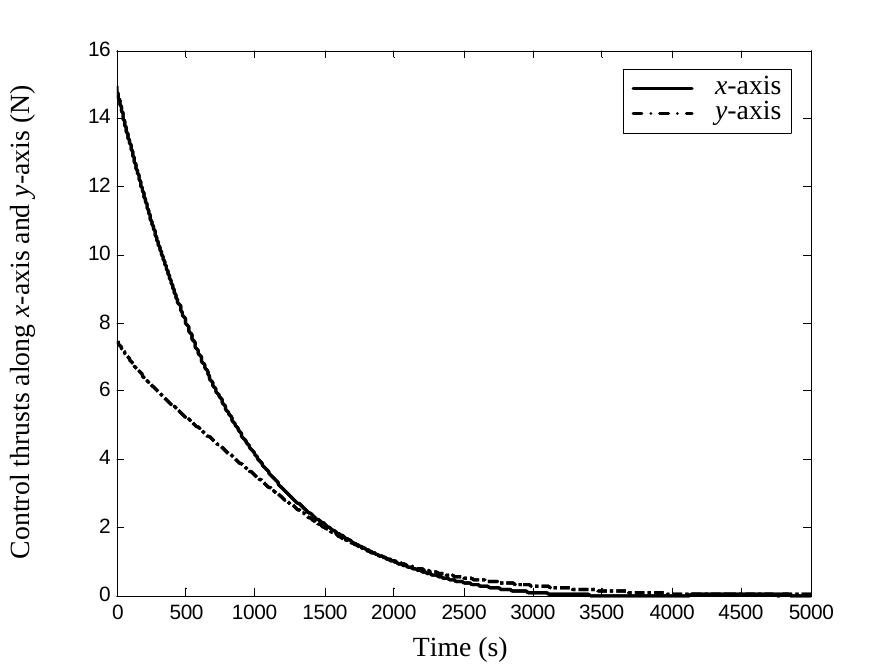}
\caption{In-plane control propulsions of chase vehicle in the first 5000 s.}\label{fig4}
\end{figure}

\begin{remark}
\hyperref[fig2]{Figure 2} and \hyperref[fig3]{Figure 3} show that the partially independent controller $\bm{u}_{pic}(t)$ fulfilled the requirement \eqref{req1}, asymptotic stability at $\bm{p}(t) = \mathbf{0}$, while the coupled controller $\bm{u}_{cc}(t)$ failed in finishing the rendezvous, which is one of the advantages of $\bm{u}_{pic}(t)$ over $\bm{u}_{cc}(t)$. \hyperref[fig4]{Figure 4} shows that the in-plane control propulsions of the chase vehicle with $\bm{u}_{pic}(t)$ are restricted below the upper bounds $u_{px,\max} = u_{py,\max} = 15$~N, which fulfilled the requirement \eqref{req3}.
\end{remark}

\subsubsection*{Out-of-Plane Motion}
The out-of-plane distances and control propulsions of the chase vehicles with different control schemes are illustrated in \hyperref[fig5]{Figure 5} and \hyperref[fig6]{Figure 6}. \hyperref[fig7]{Figure~7} depicts the overall performance costs of the rendezvouses with different schemes. 

\begin{figure}[htpb]
\centering
\includegraphics[width=1\textwidth]{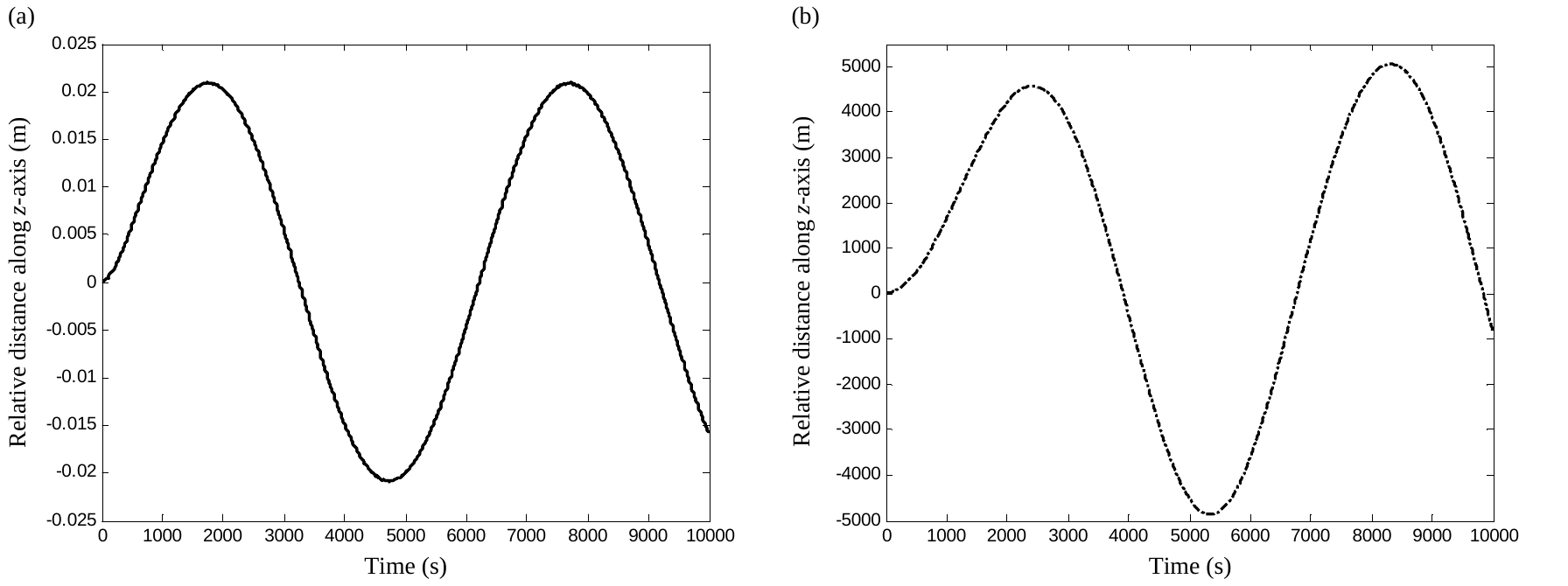}
\caption{Relative out-of-plane distance between two spacecrafts in the first 10000 s. (a) Out-of-plane distance with partially independent controller $\bm{u}_{pic}(t)$. (b) Out-of-plane distance with coupled controller $\bm{u}_{cc}(t)$.}\label{fig5}
\end{figure}
\begin{figure}[h]
\centering
\includegraphics[width=1\textwidth]{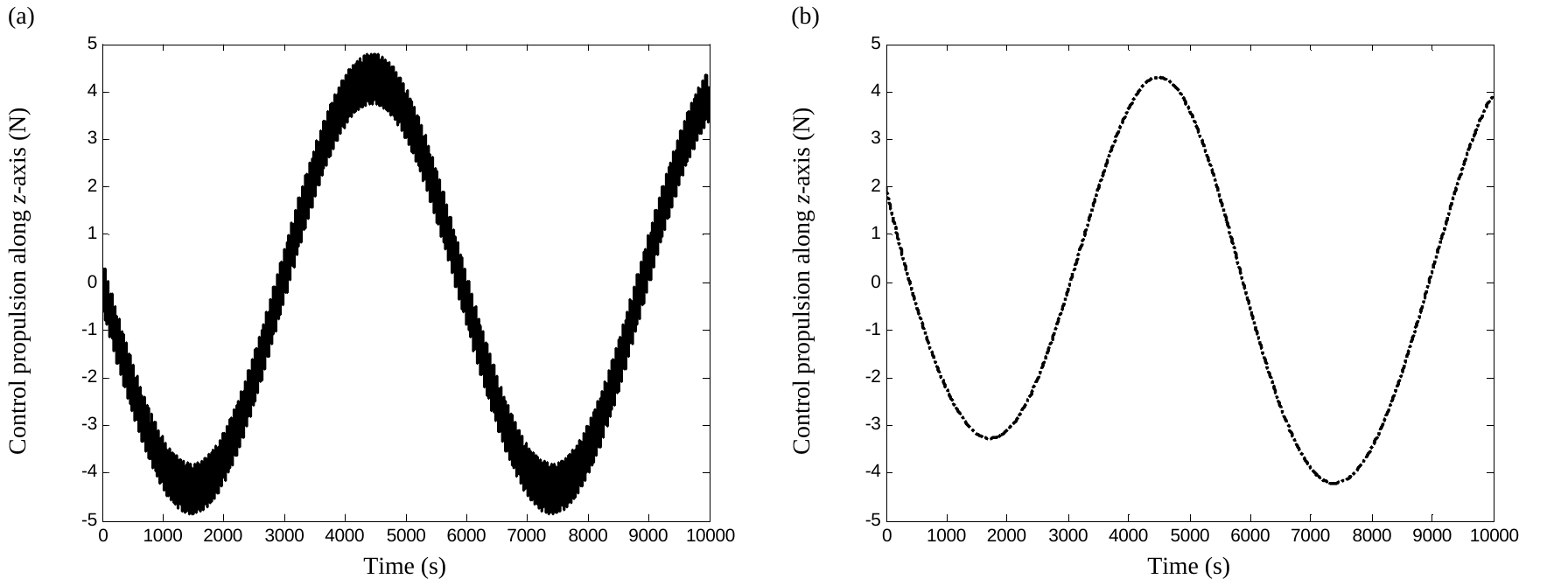}
\caption{Out-of-plane control propulsion of chase vehicle in the first 10000 s. (a) Out-of-plane control propulsion with partially independent controller $\bm{u}_{pic}(t)$. (b) Out-of-plane control propulsion with coupled controller $\bm{u}_{cc}(t)$.}\label{fig6}
\end{figure}
\begin{figure}[h]
\centering
\includegraphics[width=0.5\textwidth]{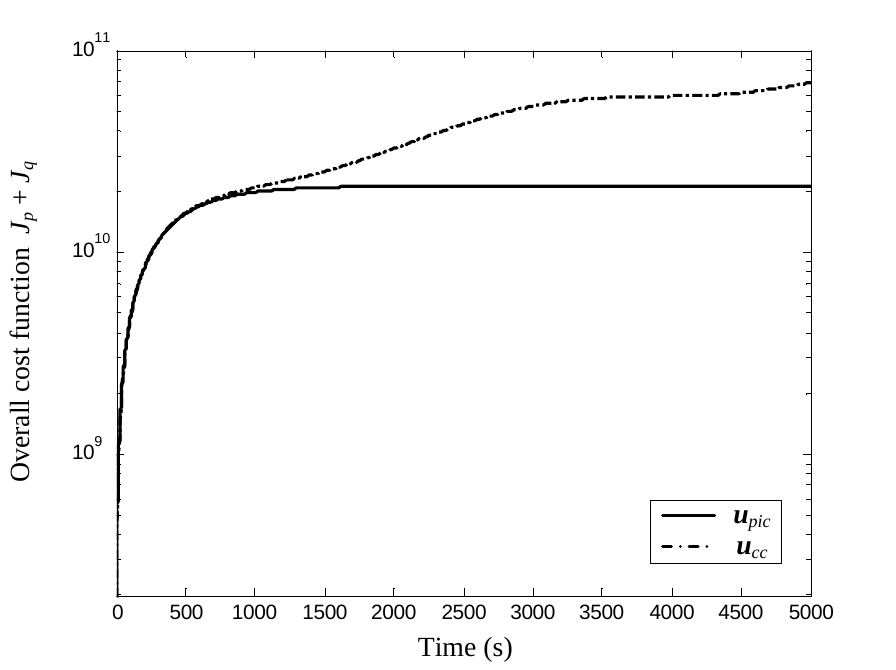}
\caption{Overall cost function in the first 5000 s.}\label{fig7}
\end{figure}

\begin{remark}
From \hyperref[fig5]{Figure 5}, we can conclude that the partially independent controller $\bm{u}_{pic}(t)$ fulfilled the requirement \eqref{req4}, robust stability at $\bm{q}(t) = \mathbf{0}$, while the coupled controller $\bm{u}_{cc}(t)$ failed again. From \hyperref[fig6]{Figure 6}, we can find that $\bm{u}_{pic}(t)$ tracked and suppressed the disturbance $\bm{w}(t)$ well, while the disturbance rejection ability of $\bm{u}_{cc}(t)$ was very poor; moreover, the magnitude of the out-of-plane propulsion was bounded and proportional to the magnitude of $w_q(t)$, which made our control method practical for engineering applications. From \hyperref[fig7]{Figure 7}, we can find that although the coupled control scheme optimize the overall cost function jointly, when out-of-plane disturbance exists, the overall cost function of the partially independent control scheme is much lower, which is another advantage of $\bm{u}_{pic}(t)$ over $\bm{u}_{cc}(t)$.
\end{remark}

\section{Conclusions}\label{sec5}
In sum, this paper has proposed a partially independent control scheme for thrust-limited rendezvous in near-circular orbits. Based on the two-body problem, a linearized dynamical model for near-circular rendezvous has been established. An anti-windup robust guaranteed controller for in-plane motion and a robust $H_\infty$ controller have been synthesized to construct the partially independent control scheme. Finally, a comparative simulation has been employed to verify the advantages of the partially independent scheme over the coupled one. Due to its robust stability, optimal performance cost and bounded control propulsion, the partially independent control scheme has a wide range of application in spacecraft rendezvous.

\section{Acknowledgment}\label{sec6}
The authors specially acknowledge the staff and the readers from arXiv.org.

\bibliography{reference}
\addtolength{\itemsep}{-0.5em}

\end{document}